\documentclass{jocg}
\usepackage{hyperref}
\usepackage{fullpage}
\usepackage{amsmath}
\usepackage{enumerate}
\usepackage{url}
\usepackage{algorithm}
\usepackage{algorithmicx}
\usepackage[noend]{algpseudocode}
\usepackage{pifont}
\usepackage{polynom}
\usepackage{subfig}
\usepackage{amsfonts}
\usepackage[all]{xy}
\usepackage{graphicx}
\usepackage{color}
\usepackage{tabularx}
\usepackage{caption}
\usepackage{amsopn}

\usepackage{amsthm}
\theoremstyle{plain}
\newtheorem{theorem}{Theorem}
\theoremstyle{definition}

\newtheorem{lemma}{Lemma}
\newtheorem{note}{Note}
\newtheorem{defn}{Definition}

\DeclareMathOperator{\Conv}{\textit{Conv}}

\DeclareMathOperator{\SD}{\textit{SD}}
\DeclareMathOperator{\SkD}{\textit{SkD}}

\DeclareMathOperator{\LD}{\textit{LD}}
\DeclareMathOperator{\SphD}{\textit{SphD}}
\DeclareMathOperator{\Sph}{\textit{Sph}}

\title{%
  \MakeUppercase{Computing the Planar $\beta$-skeleton Depth}
  \thanks{Partially supported by NSERC Canada}
}

\author{%
  David~Bremner%
  \thanks{\affil{Faculty of Computer Science, University of New Brunswick,\newline Fredericton, NB, Canada}, 
          \email{\{bremner,ra.shahsavari\}@unb.ca} \newline \textbf{This paper is submitted to Journal of Computational Geometry.}}\,
  and Rasoul~Shahsavarifar\footnotemark[2]%
}

\begin{document}
\maketitle

\begin{abstract}
\label{abst}
For $\beta \geq 1$, the \emph{$\beta$-skeleton depth} ($\SkD_\beta$) of a query point $q\in \mathbb{R}^d$ with respect to a distribution function $F$ on $\mathbb{R}^d$ is defined as the probability that $q$ is contained within the \emph{$\beta$-skeleton influence region} of a random pair of points from $F$. The $\beta$-skeleton depth of $q\in \mathbb{R}^d$ can also be defined with respect to a given data set $S\subseteq \mathbb{R}^d$. In this case, computing the $\beta$-skeleton depth is based on counting all of the $\beta$-skeleton influence regions, obtained from pairs of points in $S$, that contain $q$. The $\beta$-skeleton depth introduces a family of depth functions that contains \emph{spherical depth} and \emph{lens depth} for $\beta=1$ and $\beta=2$, respectively. The straightforward algorithm for computing the $\beta$-skeleton depth in dimension $d$ takes $O(dn^2)$. This complexity of computation is a significant advantage of using the $\beta$-skeleton depth in multivariate data analysis because unlike most other data depths, the time complexity of the $\beta$-skeleton depth grows linearly rather than exponentially in the dimension $d$. The main results of this paper include two algorithms. The first one is an optimal algorithm that takes $\Theta(n\log n)$ for computing the planar spherical depth, and the second algorithm with the time complexity of $O(n^{\frac{3}{2}+\epsilon})$ is for computing the planar $\beta$-skeleton depth, $\beta >1$. By reducing the problem of \textit{Element Uniqueness}, we prove that computing the $\beta$-skeleton depth requires $\Omega(n \log n)$ time. Some geometric properties of $\beta$-skeleton depth are also investigated in this paper. These properties indicate that \emph{simplicial depth} ($\SD$) is linearly bounded by $\beta$-skeleton depth (in particular, $\SkD_\beta\geq \frac{2}{3}SD$; $\beta\geq 1$). To illustrate this relationship, the results of some experiments on random point sets are provided. In these experiments, the bounds of $\SphD\geq 2\SD$ and $\LD \geq 1.2 \SphD$ are achieved.
\end{abstract}

\section{Introduction}
\label{intro}
The rank statistic tests play an important role in univariate non-parametric statistics. If one attempts to generalize the rank tests to the multivariate case, the problem of defining a multivariate order statistic will occur. It is not clear how to define a multivariate order or rank statistic in a meaningful way. One approach to overcome this problem is to use the notion of data depth. Data depth measures the centrality of a point in a given data set in non-parametric multivariate data analysis. In other words, it indicates how deep a point is located with respect to the data set. 
\\\\Over the last decades, various notions of data depth such as \emph{halfspace depth}~\cite{hotelling1990stability,small1990survey,tukey1975mathematics}, \emph{simplicial depth}~\cite{liu1990notion}, \emph{Oja depth}~\cite{oja1983descriptive}, \emph{regression depth}~\cite{rousseeuw1999regression}, and others have emerged as powerful tools for non-parametric multivariate data analysis. Most of them have been defined to solve specific problems in data analysis. They are different in application, definition, and geometry of their central regions (regions with the maximum depth). Regarding the planar data depth functions, some research on the algorithmic aspects of them can be found in~\cite{aloupis2001computing,aloupis2003algorithms,bremner2008output,chan2004optimal,chen2013algorithms,christmann2006regression,liu2006data,matousek1991computing,rousseeuw1999regression}.
\\\\In 2006, Elmore, Hettmansperger, and Xuan~\cite{elmore2006spherical} defined another notion of data depth named \emph{spherical depth}. It is defined as the probability that point $q$ is contained in a closed random hyperball with the diameter $\overline{x_ix_j}$, where $x_i$ and $x_j$ are two random points from a common distribution function $F$. These closed hyperballs are known as influence regions of the spherical depth function. In 2011, Liu and Modarres~\cite{liu2011lens}, modified the definition of influence region, and defined \emph{lens depth}. Each lens depth influence region is defined as the intersection of two hyperballs $B(x_i,\Vert x_i,x_j\Vert )$ and $B(x_j,\Vert x_i,x_j\Vert )$. These influence regions of spherical depth and lens depth are the multidimensional generalization of \emph{Gabriel circles} and lunes in the definition of the \emph{Gabriel Graph}~\cite{gabriel1969new} and \emph{Relative Neighbourhood Graph}~\cite{supowit1983relative}, respectively. In 2017, Yang~\cite{yang2017beta}, generalized the definition of influence region, and introduced a familly of depth functions called $\beta$-skeleton depth, indexed by a single parameter $\beta\geq 1$. The influence region of $\beta$-skeleton depth is defined to be the intersection of two hyperballs given by $B(c_i,\frac{\beta}{2} \Vert x_i,x_j\Vert )$ and $B(c_j,\frac{\beta}{2}\Vert x_i,x_j\Vert )$, where $c_i$ and $c_j$ are some combinations of $x_i$ and $x_j$. Spherical depth and lens depth can be obtained from $\beta$-skeleton depth by considering $\beta=1$ and $\beta=2$, respectively. The $\beta$-skeleton depth  has some nice properties including symmetry about the center, maximality at the centre, vanishing at infinity, and monotonicity. Depending on whether Euclidean distance or Mahalanobis distance is used to construct the influence regions, the $\beta$-skeleton depth can be orthogonally invariant or affinely invariant. All of these properties are explored in~\cite{elmore2006spherical,liu2011lens,yang2014depth,yang2017beta}.
\\\\Although we focus on the planar case here, a notable characteristic of the $\beta$-skeleton depth  is that its time complexity grows linearly in the dimension $d$ while for most other data depths the time complexity grows exponentially. To the best of our knowledge, the current best algorithm for computing the $\beta$-skeleton depth is the straightforward algorithm which takes $O(dn^2)$.
\\\\In this paper, we present an optimal algorithm for computing the spherical depth ($\beta=1$) in $\mathbb{R}^2$ that takes $\Theta(n\log n)$ time. We also introduce an $O(n^{\frac{3}{2}+\epsilon})$ algorithm for computing the planar $\beta$-skeleton depth, $\beta>1$. Furthermore, we reduce the problem of Element Uniqueness to prove that computing the $\beta$-skeleton depth ($\beta\geq 1$) of a query point requires $\Omega (n\log n)$ time. We also investigate some geometric properties of $\beta$-skeleton depth. These properties lead us to bound the simplicial depth, spherical depth, and lens depth of a point in terms of one another. Finally, some experiments are provided to illustrate the relationships between $\beta$-skeleton depth and simplicial depth.

\section{$\beta$-skeleton Depth}
\label{sec:beta}
\begin{defn}
\label{def:influence}
The $\beta$-skeleton influence region of $x_i$ and $x_j$ ($S_{\beta}(x_i, x_j)$) for $0\leq \beta < \infty $ is defined as follows:
\begin{itemize}
\item for $\beta=0$, $S_{\beta}(x_i, x_j)$ is equivalent to the line segment $\overline{x_ix_j}$.
\item for $0 < \beta < 1$, $S_{\beta}(x_i, x_j)$ is the intersection of two balls with the radius ${\Vert x_i - x_j\Vert}/{2\beta}$ such that the boundaries contain both points $x_i$ and $x_j$.
\item for $1 \leq \beta < \infty$, the lune  based version of $S_{\beta}(x_i, x_j)$ is defined as:
\begin{equation}
\label{eq:beta-influence}
S_{\beta}(x_i, x_j)= B(c_i,r) \cap B(c_j,r),
\end{equation}
where $r=\frac{\beta}{2}\Vert x_i - x_j\Vert$, $c_i=\frac{\beta}{2}x_i + (1-\frac{\beta}{2})x_j$,  and $c_j=(1-\frac{\beta}{2})x_i + \frac{\beta}{2}x_j$.  
\end{itemize}
Figure~\ref{fig:betasphericallens} illustrates the $\beta$-skeleton influence regions for different values of $\beta$.
\end{defn}

\begin{note}
It seems that for $0<\beta <1$, the $\beta$-skeleton influence region in dimension $d>2$ is not well defined because the two balls in Definition~\ref{def:influence} are not unique. Since the $\beta$-skeleton depth is defined for $1\leq \beta<\infty$~\cite{yang2017beta}, we ignore the $\beta <1$ in our study of the $\beta$-skeleton depth and its influence region.
\end{note}
  
\begin{note}
In literature, the ball based version of $S_{\beta}(x_i, x_j)$, $1 \leq \beta < \infty$ is also defined. In this case, the $S_{\beta}(x_i, x_j)$ is given by the union of the balls, instead of the intersection of them in Equation~\eqref{eq:beta-influence}. For example, the hatched area in Figure~\ref{fig:betasphericallens} denotes the ball based version of the $S_{2}(x_i, x_j)$. Since the definition of the $\beta$-skeleton depth is given based on the lune based $S_{\beta}(x_i, x_j)$ alone, by $S_{\beta}(x_i, x_j)$ we only mean its lune based version hereafter in this paper.
\end{note}
  
\begin{figure*}[!ht]
  \centering
    \includegraphics[width=0.75\textwidth]{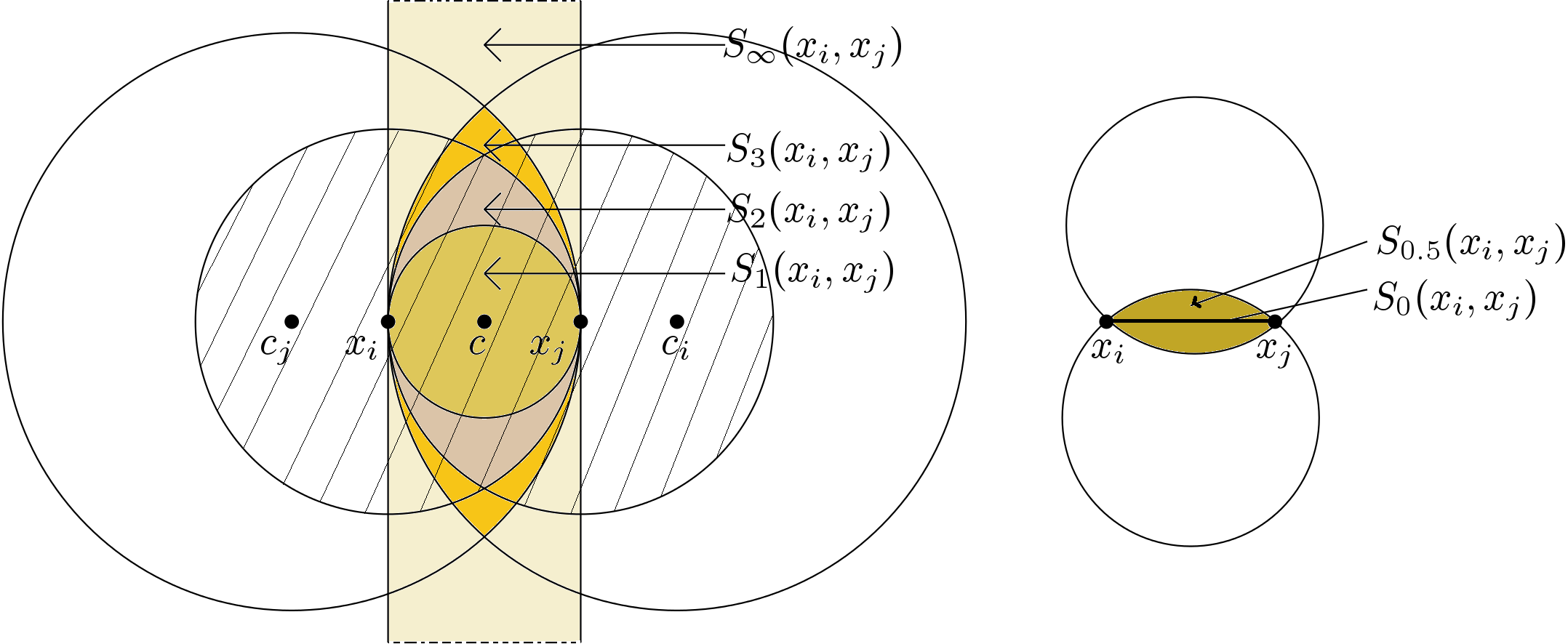}
  \caption{The $\beta$-skeleton influence regions defined by $x_i$ and $x_j$ for $\beta$=0, 0.5, 1, 2, 3, and $\beta\rightarrow\infty$, where $c=\frac{x_i+x_j}{2}$, $c_i=\frac{3}{2}x_i+(1-\frac{3}{2})x_j$, and $c_j=(1-\frac{3}{2})x_i+\frac{3}{2}x_j$}
  \label{fig:betasphericallens}
\end{figure*}

\begin{defn}
For integers $i$ and $j$ ($i<j$), we use the shorter notation $\{x_i..x_j\}$ to represent the set $\{x_i,x_{i+1
},...,x_j\}$.
\end{defn} 

\begin{defn}
For a given data set $S=\{x_1..x_n\}$ of points in general positions in $\mathbb{R}^d$ and the parameter $\beta$, the $\beta$-skeleton is defined as a graph $(S,E)$, such that $x_ix_j \in E$ if and only if no point in $S\backslash\{x_i,x_j\}$ belongs to $S_{\beta}(x_i, x_j)$, $1\leq i,j\leq n$.
\end{defn}   

\begin{defn}
The $\beta$-skeleton depth is defined as the probability that a point is contained within the $\beta$-skeleton influence region of two random vectors from a common distribution. For a distribution function $F$ on $\mathbb{R}^d$, and a vector $q$ in $\mathbb{R}^d$, the $\beta$-skeleton depth of $q$ with respect to $F$ is defined by equation~\eqref{eq:betadistribution}, where $x_i$ and $x_j$ are two random observations from $F$. 
\begin{equation}
\label{eq:betadistribution}
\SkD_{\beta}(q;F)= P[q \in S_{\beta}(x_i, x_j)];\; \beta\geq 1
\end{equation}
\end{defn}
\begin{defn}
Let $S=\{x_1..x_n\}$ be a set of points in $\mathbb{R}^d$. The $\beta$-skeleton depth of a point $q \in \mathbb{R}^d$ with respect to $S$, is defined as a proportion of the $\beta$-skeleton influence regions of $S_{\beta}(x_i, x_j), 1\leq i < j \leq n$ that contain $q$. Using the indicator function $I$, this definition can be represented by Equation~\eqref{eq:beta}.
\end{defn}
\begin{equation}
\label{eq:beta}
\SkD_{\beta}(q;S)= \frac{1}{{n \choose 2}}\sum_{1\leq i<j\leq n} {I(q \in S_{\beta}(x_i, x_j)})
\end{equation}
\\Referring to Equation~\eqref{eq:beta-influence}, it can be verified that $\{q \in S_{\beta}(x_i, x_j)\}$ is equivalent to the inequality of ${\beta \Vert x_i - x_j \Vert}/{2} \geq \max \{ \Vert q - c_i \Vert , \Vert q - c_j \Vert\}$, where $\Vert c_i - c_j \Vert=\Vert (1-\beta)(x_i-x_j) \Vert = (\beta - 1)\Vert x_i - x_j \Vert$ for $\beta \geq 1$. To compute $\SkD_{\beta}(q;S)$ in a straightforward way, it is sufficient to check this inequality for all $1\leq i,j\leq n$. As such, the computational complexity of the $\beta$-skeleton depth in $\mathbb{R}^d$ is $O(d n^2)$.
\\\\In~\cite{liu2011lens,yang2014depth,yang2017beta}, it is proved that the $\beta$-skeleton depth functions satisfy the data depth framework provided by Zuo and Serfling~\cite{zuo2000general} because these depth functions are monotonic, maximized at the center, and vanishing at infinity. The $\beta$-skeleton depth functions are also orthogonally (affinely) invariant if the Euclidean (Mahalanobis) distance is used to construct the influence regions of $\beta$-skeleton depth influence regions.

\subsection{Spherical Depth and Lens Depth}

As we discussed in Section~\ref{sec:beta}, the $\beta$-skeleton depth is a family of statistical depth functions that includes the spherical depth when $\beta=1$, and the lens depth when $\beta=2$. From the equations~\eqref{eq:beta-influence} and~\eqref{eq:beta}, the definitions of spherical depth ($\SphD$) and lens depth ($\LD$) of a query point $q$ with respect to a given data set $S$ in $\mathbb{R}^d$ are as follows:

\begin{equation}
\label{eq:spherical}
\SphD(q;S)= \frac{1}{{n \choose 2}}\sum_{1\leq i<j\leq n} {I(q \in \Sph(x_i, x_j)})
\end{equation}

\begin{equation}
\label{eq:lens}
\LD(q;S)= \frac{1}{{n \choose 2}}\sum_{1\leq i<j\leq n} {I(q \in L(x_i, x_j)}),
\end{equation}
where the influence regions $\Sph(x_i,x_j)$ and $L(x_i,x_j)$ are equal to $S_1(x_i,x_j)$ and  $S_2(x_i,x_j)$, respectively.
\\Figure~\ref{fig:3points} shows  the spherical depth and lens depth of points in the plane with respect to a set of three points $p_1, p_2$, and $p_3$ in $\mathbb{R}^2$.

\begin{figure*}[!ht]
  \centering
    \includegraphics[width=0.75\textwidth]{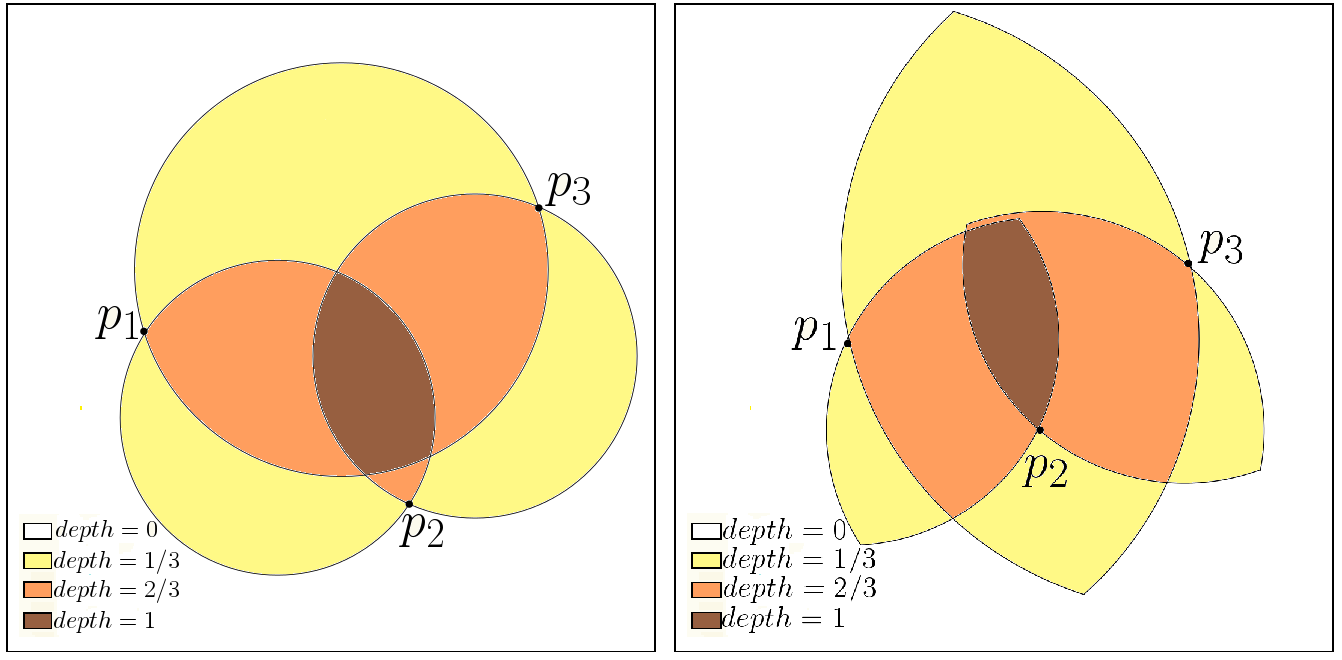}
  \caption{Spherical depth (left figure) and lens depth (right figure) of points in the plane}
  \label{fig:3points}
\end{figure*}

\section{Algorithms}

The current best algorithm for computing the $\beta$-skeleton depth of a point $q\in \mathbb{R}^d$ with respect to a data set $S=\{x_1..x_n\}\subseteq \mathbb{R}^d$ is the brute force algorithm. This naive algorithm needs to check all of the $n \choose 2$ $\beta$-skeleton influence regions obtained from the data points to figure out how many of them contain $q$. Checking all of such influence regions causes the naive algorithm to take $\Theta(dn^2)$. In this section, we present an optimal algorithm for computing the planar spherical depth ($\beta=1$) and an algorithm to compute the planar $\beta$-skeleton depth when $\beta>1$. In these algorithms, we need to solve some halfspace and some circle range counting problems, where all of the halfspaces have one common point. The circles also have the same characteristic. In the spherical depth algorithm, we have the halfspace range counting problems alone whereas in computing the $\beta$-skeleton depth, $\beta>1$ we need to solve both circle and halfspace range counting problems. 

\subsection{Optimal Algorithm for Computing the Planar Spherical Depth of a Query Point} 

Instead of checking all of the spherical  influence regions, we focus on the geometric aspects of such regions in $\mathbb{R}^2$. The geometric properties of these regions lead us to develop an $\Theta(n\log n)$ algorithm for the computation of planar spherical depth of $q\in \mathbb{R}^2$.

\begin{lemma}
For arbitrary points $a$, $b$, and $t$ in $\mathbb{R}^2$, $t \in \Sph(a,b)$ if and only if $\angle atb \geq \frac{\pi}{2}$.
\label{lm:point-circle}
\end{lemma}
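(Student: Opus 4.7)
The first move is to simplify the definition of $\Sph(a,b)$ in the special case $\beta = 1$. From Definition~\ref{def:influence}, $c_i = \tfrac{1}{2}a + \tfrac{1}{2}b$ and $c_j = \tfrac{1}{2}a + \tfrac{1}{2}b$ coincide with the midpoint $c$ of $\overline{ab}$, and both balls have radius $r = \tfrac{1}{2}\|a-b\|$. Hence $\Sph(a,b) = B(c,r) \cap B(c,r) = B(c,r)$ is simply the closed disk whose diameter is $\overline{ab}$. So the lemma reduces to Thales' classical theorem on the semicircle: a point $t$ lies in the closed disk with diameter $\overline{ab}$ iff $\angle atb \geq \pi/2$.

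The plan is to prove this via a short vector computation rather than an appeal to inscribed-angle folklore. Set $v = a - c = -(b - c)$, so that $\|v\| = r$. Then
\begin{equation*}
(t - a)\cdot(t - b) = \bigl((t-c) - v\bigr)\cdot\bigl((t-c) + v\bigr) = \|t - c\|^2 - \|v\|^2 = \|t-c\|^2 - r^2.
\end{equation*}
Therefore $t \in \Sph(a,b)$ iff $\|t - c\|^2 \leq r^2$ iff $(t-a)\cdot(t-b) \leq 0$.

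Finally, translate the sign condition into an angular one. Assuming $t \neq a$ and $t \neq b$ (so that the angle is defined),
\begin{equation*}
\cos \angle atb \;=\; \frac{(a - t)\cdot(b - t)}{\|a - t\|\,\|b - t\|} \;=\; \frac{(t - a)\cdot(t - b)}{\|t - a\|\,\|t - b\|},
\end{equation*}
which is non-positive precisely when $\angle atb \in [\pi/2,\pi]$. Combining this with the previous step gives the equivalence. The only thing to be careful about is the degenerate case $t = a$ or $t = b$, where the angle is undefined but $t$ trivially lies in $\Sph(a,b)$; this is handled by the standard convention that the lemma's angular condition is interpreted as vacuously satisfied at the two diameter endpoints. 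No step looks like a genuine obstacle; the entire content is in recognizing that $c_i = c_j$ collapses the lens to a disk, after which the dot-product identity does the work in one line.
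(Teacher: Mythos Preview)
Your proof is correct and takes a genuinely different route from the paper. The paper handles the boundary case by citing Thales' Theorem and then disposes of the interior and exterior cases by two separate contradiction arguments: it extends a segment to meet the circle, invokes Thales' at the intersection point, and derives a contradiction from the triangle angle sum exceeding $\pi$. Your argument instead collapses everything into the single identity $(t-a)\cdot(t-b)=\|t-c\|^2-r^2$, which simultaneously captures the boundary, interior, and exterior cases and yields the angular characterization directly via the sign of $\cos\angle atb$. What your approach buys is brevity and self-containment (no appeal to inscribed-angle theorems, no case split, no auxiliary constructions); what the paper's approach buys is a more classically geometric picture that matches the accompanying figure and may be easier to visualize for readers less comfortable with vector manipulations. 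Your handling of the degenerate endpoints $t=a$ and $t=b$ is also slightly cleaner than the paper's, which does not address them explicitly.
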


\begin{proof}
If $t$ is on the boundary of $\Sph(a,b)$ \emph{Thales' Theorem}\footnote{Thales' Theorem also known as the \emph{Inscribed Angle Theorem}: If $a$, $b$, and $c$ are points on a circle where $\overline{ac}$ is a diameter of the circle, then $\angle abc$ is a right angle.} suffices as the proof in both directions. For the rest of the proof, by $t \in \Sph(a,b)$ we mean  $t\in int\Sph(a,b)$.    
\\$\Rightarrow$) For $t \in \Sph(a,b)$, suppose that $\angle atb < \frac{\pi}{2}$ (proof by contradiction). We continue the line segment $\overline{at}$ to cross the boundary of the $\Sph(a,b)$. Let $t'$ be the crossing point (see the left figure in Figure~\ref{fig:point-circle}). Since $\angle atb < \frac{\pi}{2}$, then, $\angle btt'$  is greater than $\frac{\pi}{2}$. Let $\angle btt'=\frac{\pi}{2}+\epsilon_1; \epsilon_{1}>0 $. From the Thales' Theorem, we know that $\angle at'b$ is a right angle. The angle $tbt'= \epsilon_{2}>0$ because $t\in \Sph(a,b)$. Summing up the angles in $\bigtriangleup tt'b$, as computed in~\eqref{eq:point-circle-pi-in}, leads to a contradiction. So, this direction of proof is complete.

\begin{equation}
\angle tt'b + \angle t'bt + \angle btt'\geq \frac{\pi}{2}+\epsilon_{2}+ \frac{\pi}{2}+\epsilon_{1} >\pi
\label{eq:point-circle-pi-in} 
\end{equation}
\\$\Leftarrow$) If $\angle atb = \frac{\pi}{2}+\epsilon_{1}; \epsilon_{1}>0$, we prove that $t \in \Sph(a,b)$. Suppose that $t \notin \Sph(a,b)$ (proof by contradiction). Since $t \notin \Sph(a,b)$, at least one of the line segments $\overline{at}$ and $\overline{bt}$ crosses the boundary of $\Sph(a,b)$. Without loss of generality, assume that $\overline{at}$ is the one that crosses the boundary of $\Sph(a,b)$ at the point $t'$ (see the right figure in Figure~\ref{fig:point-circle}). Considering Thales' Theorem, we know that $\angle at'b=\frac{\pi}{2}$ and consequently, $\angle bt't=\frac{\pi}{2}$. The angle $\angle t'bt=\epsilon_{2}>0$ because $t \notin \Sph(a,b)$. If we sum up the angles in the triangle $\bigtriangleup tt'b$, the same contradiction as in~\eqref{eq:point-circle-pi-in} will be implied.
\end{proof}

\begin{figure*}[!ht]
  \centering
    \includegraphics[width=0.75\textwidth]{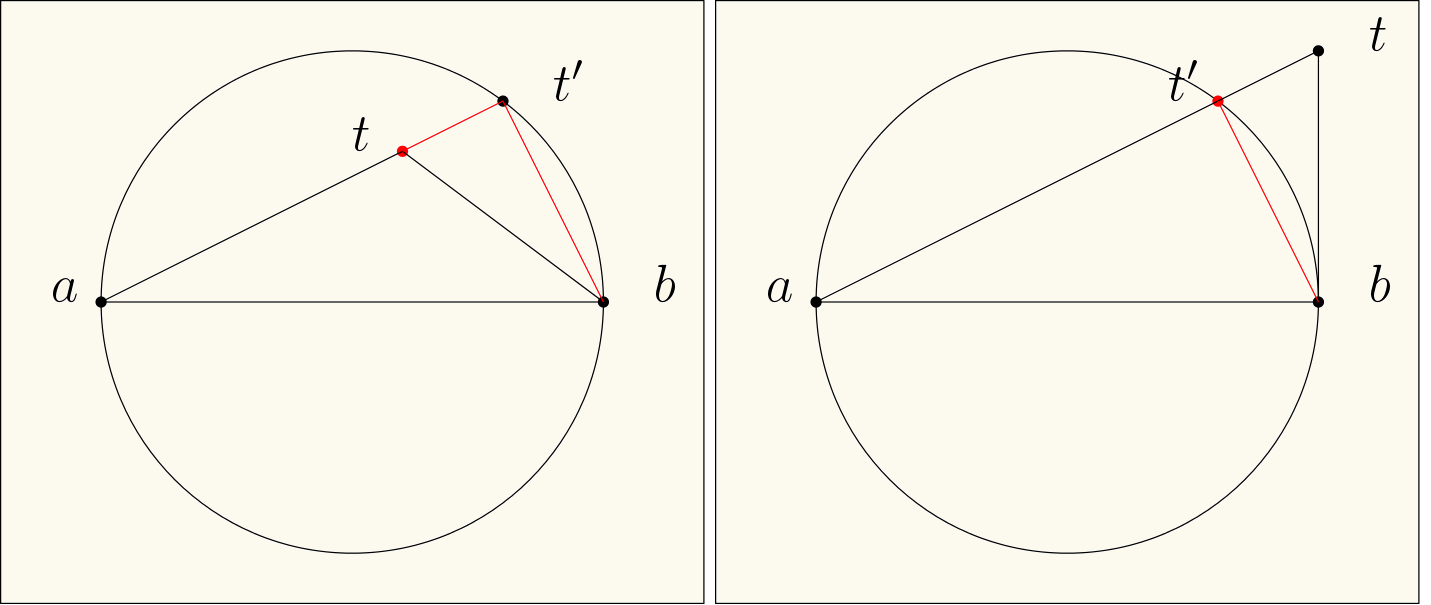}
  \caption{Point $t$ and spherical influence region $Sph(a,b)$}
  \label{fig:point-circle}
\end{figure*}

\paragraph{Algorithm~\ref{Alg:sph-pseudocode}:} Using Lemma~\ref{lm:point-circle}, we present an algorithm to compute the spherical depth of a query point $q\in \mathbb{R}^2$ with respect to $S=\{x_1..x_n\} \subseteq \mathbb{R}^2$.  This algorithm is summarized in the following steps. The pseudocode of this algorithm is provided in the Appendix. 

\begin{itemize}

\item \textbf{Translating the points:}
Suppose that $T$ is a translation by $(-q)$. We apply $T$ to translate $q$ and all data points into their new coordinates. Obviously, $T(q)= O$.

\item \textbf{Sorting the translated data points:} In this step we sort the translated data points based on their angles in their polar coordinates. After doing this step, we have $S_T$ which is a sorted array of the translated data points.

\item \textbf{Calculating the spherical depth:} For the $i^{th}$ element in $S_{T}$, we define $O_i$ and $N_i$ as follows: 

\begin{equation}
\label{eq:O_i}
\begin{split}
&O_i=\left\{j: x_j\in S_T \: , \frac{\pi}{2} \leq |\theta(x_i) -\theta(x_j)|\leq \frac{3\pi}{2}\right\} \\& N_i= \{ 1,2,...,n\} \setminus O_i.
\end{split}
\end{equation}

Thus the spherical depth of $q$ with respect to $S$, can be computed by:
\begin{equation}
\label{eq:sph}
\SphD(q;S)=\SphD(O;S_T)= \frac{1}{2}\sum_{1\leq i\leq n}|O_i|.
\end{equation}

To present a formula for computing $|O_i|$, we define $f(i)$ and $l(i)$ as follows:
\[
f(i)=
\begin{cases}
\min N_i -1 &\text{if $\frac{\pi}{2}< \theta(x_i) \leq \frac{3\pi}{2}$}\\
    \min O_i & \text{otherwise}
\end{cases}
\]
\[
l(i)=
\begin{cases}
\max N_i +1 &\text{if $\frac{\pi}{2}< \theta(x_i) \leq \frac{3\pi}{2}$}\\
    \max O_i & \text{otherwise.}
\end{cases}
\]

Figure~\ref{fig:S-T-sorted} illustrates $O_i$, $N_i$,  $f(i)$, and $l(i)$ in two different cases. Considering the definitions of $f(i)$ and $l(i)$,
\[
|O_i|=
\begin{cases}
f(i)+(n-l(i)+1) &\text{if $\frac{\pi}{2}< \theta(x_i) \leq \frac{3\pi}{2}$}\\
    l(i)-f(i)+1 & \text{otherwise.}
\end{cases}
\]
\end{itemize}
This allows us to compute $|O_i|$ using a pair of binary searches. 

\begin{figure*}[!ht]
  \centering
    \includegraphics[width=0.75\textwidth]{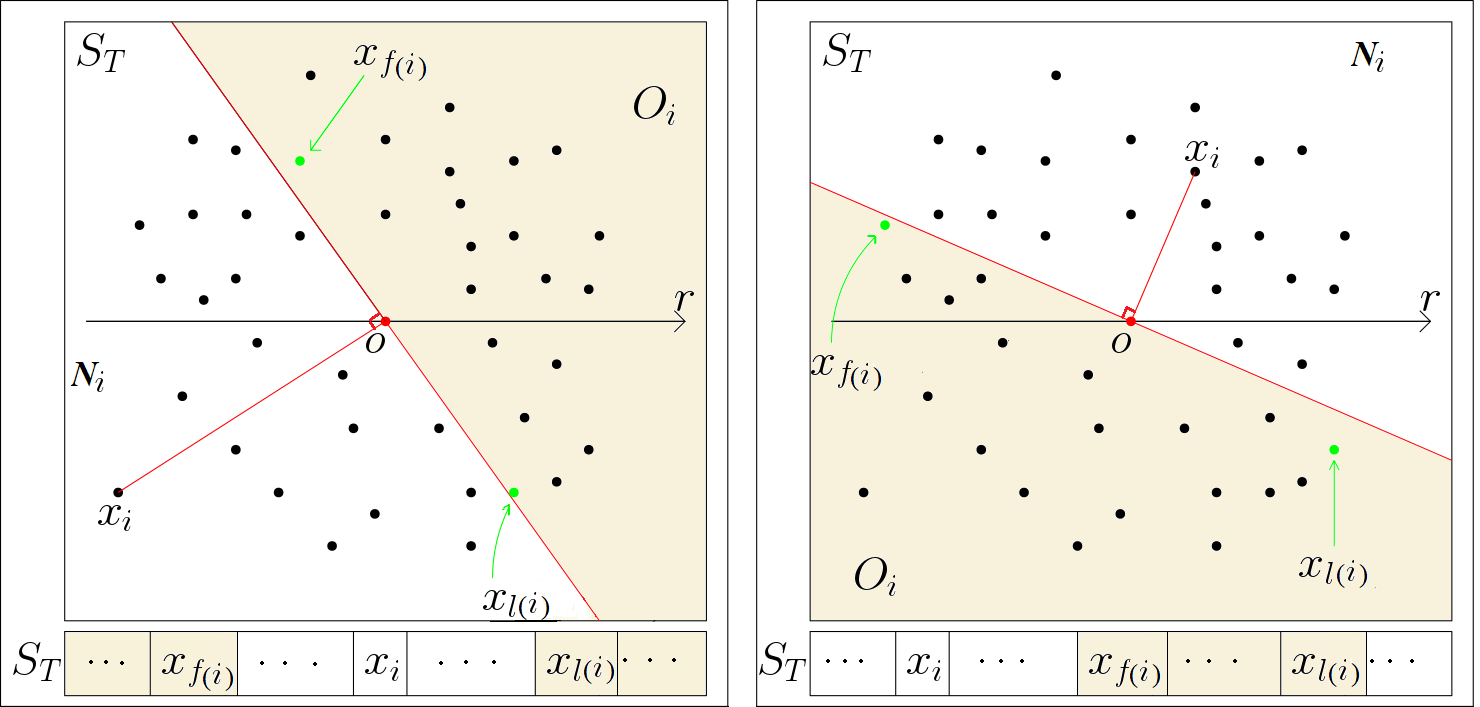}
  \caption{$\theta(x_i)\in (\frac{\pi}{2}, \frac{3\pi}{2}]$ (left figure), and $\theta(x_i)\notin (\frac{\pi}{2}, \frac{3\pi}{2}]$ (right figure)}
  \label{fig:S-T-sorted}
\end{figure*}

\paragraph{Time complexity of Algorithm~\ref{Alg:sph-pseudocode}:} The first procedure in the algorithm takes $O(n)$ to translate $q$ and all data points into the new coordinate system. The second procedure takes $O(n\log n)$ time. In this procedure, the loop iterates $n$ times, and the sorting algorithm takes $O(n\log n)$. Due to using binary search for every $O_i$, the running time of the last procedure is also $O(n\log n)$. The rest of the algorithm contributes some constant time. In total, the running time of the algorithm is $O(n\log n)$.

\paragraph{Coordinate system:} In practice it may be preferable to work in the Cartesian coordinate system. Sorting by angle can be done using some appropriate right-angle tests (determinants). Regarding the other angle comparisons, they can be done by checking the sign of dot products.

\subsection{Algorithms for Computing the Planar $\beta$-skeleton Depth ($\beta >1$ ) of a Query Point}

As illustrated in Figure~\ref{fig:betasphericallens}, $S_{\beta}(x_i,x_j) ; \beta>1$ forms some lenses, and $S_{\infty}(x_i,x_j)$ forms some slabs for different $x_i$ and $x_j$ in $S\subseteq \mathbb{R}^2$. Using some geometric properties of such lenses and slabs, we prove Lemma~\ref{lm:q-skeleton}. This lemma along with some results in range counting problems studied by Agrawal in~\cite{agarwal2013range} help us to compute $\SkD_{\beta}(q;S); \beta>1$ in $O(n^{\frac{3}{2}+\epsilon})$ time, where $q$ and $S$ are in $\mathbb{R}^2$.

\begin{defn}
\label{def:line}
For an arbitrary non-zero point $a \in \mathbb{R}^2$  and parameter $\beta \geq 1$, $\ell(p)$ is a line that is perpendicular to $\overrightarrow{a}$ at the point $p=p(a,\beta)={(\beta -1)a}/{\beta}$. This line forms two halfspaces $H_o(p)$ and $H_a(p)$. The one that includes the origin is $H_o(p)$ and the other one that includes $a$ is $H_a(p)$.
\end{defn}

\begin{defn}
\label{def:ball}
For a disk $B(c,r)$ with the center $c=c(a,\beta)={\beta a}/{2(\beta -1)}$ and radius $r=\Vert c \Vert$, $B_o(c,r)$ is the intersection of $H_o(p)$ and $B(c,r)$,  and $B_a(c,r)$ is the intersection of $H_a(p)$ and $B(c,r)$, where $\beta>1$ and $a$ is an arbitrary non-zero point in $\mathbb{R}^2$.
\\\\Figure~\ref{fig:halfspace-balls} is an illustration of these definitions for different values of parameter $\beta$.
\end{defn}

\begin{figure*}[!ht]
  \centering
    \includegraphics[width=0.75\textwidth]{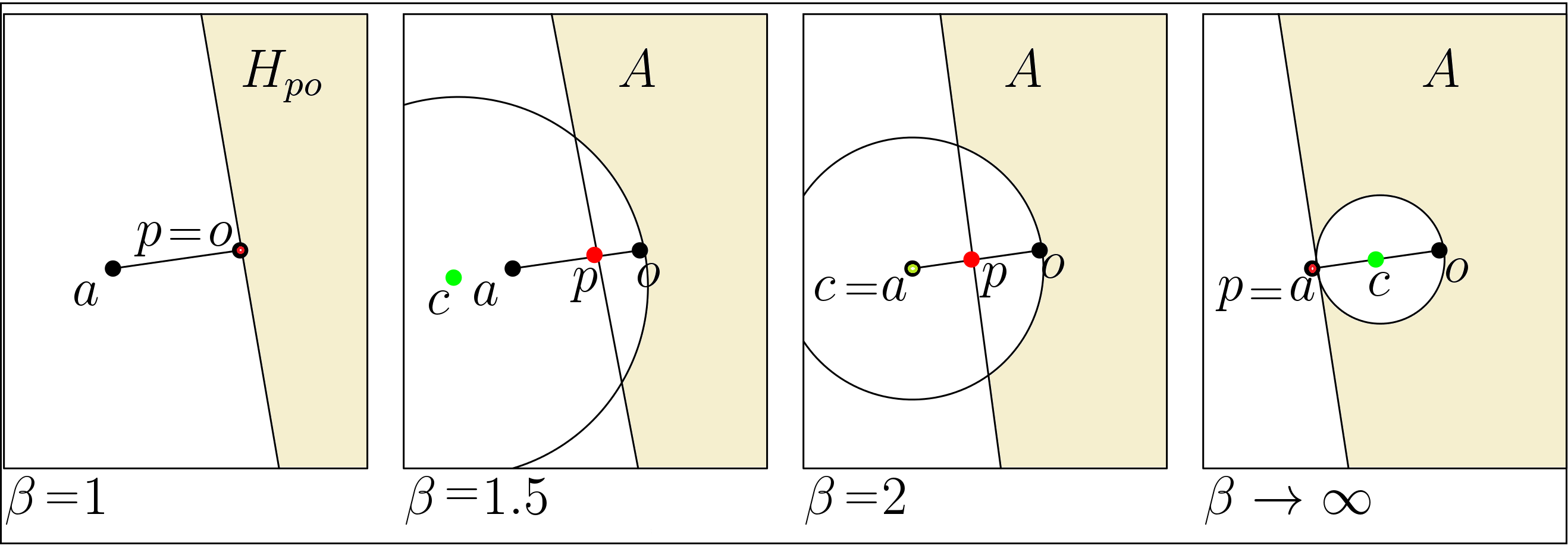}
  \caption{The $H_o(p)$ and $B(c,r)$ defined by $a\in \mathbb{R}^2$ for $\beta=1,\;1.5,\;2,\;\text{and}\;\beta\rightarrow\infty$, where $A=H_o(p)\setminus \{intB_o(c,r)\}$}
\label{fig:halfspace-balls}
\end{figure*}

\begin{lemma}
For arbitrary non-zero points $a$, $b$ in $\mathbb{R}^2$ and parameter $\beta>1$, $b \in H_o(p)\setminus \{int B_o(c,r)\}$ if and only if the origin $O=(0,0)$ is contained in $S_{\beta}(a,b)$, where $c={\beta a}/{2(\beta -1)}$, $r=\Vert c \Vert$, and $p={(\beta -1)a}/{\beta}$.
\label{lm:q-skeleton}
\end{lemma}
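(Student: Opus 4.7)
The plan is to reduce the two-disk condition $O\in S_\beta(a,b) = B(c_i,r')\cap B(c_j,r')$ to two scalar inequalities in $a$ and $b$, each matching one of the two defining pieces of $H_o(p)\setminus \mathrm{int}\,B_o(c,r)$ from Definitions~\ref{def:line} and~\ref{def:ball}.

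First, I would expand each of $\|c_i\|^2 \le r'^2$ and $\|c_j\|^2 \le r'^2$, where $c_i = \tfrac{\beta}{2}a + (1-\tfrac{\beta}{2})b$, $c_j = (1-\tfrac{\beta}{2})a + \tfrac{\beta}{2}b$, and $r' = \tfrac{\beta}{2}\|a-b\|$. The key algebraic observation is that the $\|a\|^2$ (respectively $\|b\|^2$) quartic-looking terms cancel against the corresponding pieces on the right-hand side, so that $\|c_i\|^2 \le r'^2$ collapses to $\beta\langle a,b\rangle \le (\beta-1)\|b\|^2$; by the $a\leftrightarrow b$ symmetry of the setup, $\|c_j\|^2 \le r'^2$ collapses to $\beta\langle a,b\rangle \le (\beta-1)\|a\|^2$.

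Second, I would recognize each collapsed inequality as a set-membership condition on $b$. Since $\ell(p) = \{x : \langle x,a\rangle = (\beta-1)\|a\|^2/\beta\}$ and the origin lies in $H_o(p)$, the inequality $\beta\langle a,b\rangle \le (\beta-1)\|a\|^2$ is exactly $b \in H_o(p)$. Completing the square against $c = \beta a/(2(\beta-1))$ rewrites $\beta\langle a,b\rangle \le (\beta-1)\|b\|^2$ as $\|b-c\|^2 \ge \|c\|^2 = r^2$, i.e.\ $b \notin \mathrm{int}\,B(c,r)$. Combining the two yields $b \in H_o(p)\setminus \mathrm{int}\,B_o(c,r)$, using $B_o(c,r) = H_o(p)\cap B(c,r)$ and the fact that, for $b$ in $H_o(p)$, failing to sit in $\mathrm{int}\,B(c,r)$ is the same as failing to sit in the (relative) interior of the half-disk $B_o(c,r)$.

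The main obstacle is the coefficient bookkeeping in the first step; once each inequality collapses to its symmetric normal form, the halfspace and disk interpretations follow immediately from the definitions. Both directions of the biconditional are handled simultaneously because every step above is an \textit{iff}.
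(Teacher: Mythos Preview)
Your approach is correct and essentially identical to the paper's: both expand the two conditions $\lVert c_i\rVert\le r'$ and $\lVert c_j\rVert\le r'$, collapse them to $\beta\langle a,b\rangle \le (\beta-1)\lVert b\rVert^2$ and $\beta\langle a,b\rangle \le (\beta-1)\lVert a\rVert^2$, and then read these off as the disk-exterior and halfspace conditions (the paper introduces the shorthand $k=\beta/(2(\beta-1))$, but the algebra is the same). The only item you omit is the paper's preliminary verification that $\ell(p)$ actually meets $B(c,r)$ (via $d(c,p)\le r$), ensuring $B_o(c,r)$ is a genuine half-disk; this is a minor completeness point and not central to the biconditional.
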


\begin{proof} 
First, we show that $B_o(c,r)$ is a well-defined set meaning that $\ell(p)$ intersects $B(c,r)$. We compute $d(c,\ell(p))$, the distance of $c$ from $\ell(p)$, and prove that this value is not greater than $r$. It can be verified that $d(c,\ell(p))= d(c,p)$. Let $k={\beta}/{2(\beta -1)}$; the following calculations complete this part of the proof.
\begin{align*}
d(c,p)&=d(\frac{\beta a}{2(\beta -1)},\frac{(\beta -1)a}{\beta})= d(ka,\frac{1}{2k}a)=(k-\frac{1}{2k})\sqrt{({a_x}^2+{a_y}^2)}=(\frac{2k^2-1}{2k})\Vert a \Vert\\& \leq \frac{2k^2}{2k}\Vert a \Vert= k\Vert a \Vert = r.
\end{align*}
\\\\We recall Definition~\ref{def:influence} for $\beta>1$, $S_{\beta}(a,b)=B(c_a,\frac{\beta}{2}\Vert a-b \Vert) \cap B(c_b,\frac{\beta}{2}\Vert a-b \Vert)$, where $c_a=\frac{\beta}{2}a +(1-\frac{\beta}{2})b$ and $c_b=\frac{\beta}{2}b +(1-\frac{\beta}{2})a$. Using this definition, following equivalencies can be derived from $O \in S_{\beta}(a,b)$.
\[
O \in S_{\beta}(a,b) \Leftrightarrow \frac{\beta \Vert a-b \Vert}{2} \geq max\{\Vert c_a \Vert , \Vert c_b \Vert\}\Leftrightarrow \beta \Vert a-b \Vert \geq max\{\Vert \beta(a-b)+2b\Vert , \Vert \beta(b-a)+2a \Vert\} \Leftrightarrow
\]
\[
\beta ^2 \Vert a-b \Vert ^2 \geq max\{\Vert \beta(a-b)+2b\Vert ^2 , \Vert \beta(b-a)+2a \Vert ^2\} \Leftrightarrow 0 \geq max\{b^2(1-\beta)+\beta \overrightarrow{a}.\overrightarrow{b} , a^2(1-\beta)+\beta \overrightarrow{a}.\overrightarrow{b}\}.
\]
\\By solving these inequalities for $(\beta -1)/\beta$ which is equal to $1/2k$, we have:  
\begin{equation}
\label{eq:equivalencies}
\frac{1}{2k} \geq max \left\{ \frac{\overrightarrow{a}.\overrightarrow{b}}{\Vert a \Vert ^2}, \frac{\overrightarrow{a}.\overrightarrow{b}}{\Vert b \Vert ^2} \right\}.
\end{equation}
For a fixed point $a$, the inequalities in Equation~\eqref{eq:equivalencies} determine one halfspace and one disk given by~\eqref{eq:halfspace} and~\eqref{eq:disk}, respectively.
\begin{equation}
\label{eq:halfspace}
\frac{1}{2k} \geq \frac{\overrightarrow{a}.\overrightarrow{b}}{\Vert a \Vert ^2} \Leftrightarrow \overrightarrow{a}.\overrightarrow{b} \leq\frac{1}{2k} \Vert a \Vert ^2
\end{equation}
\begin{equation}
\label{eq:disk}
\frac{1}{2k} \geq \frac{\overrightarrow{a}.\overrightarrow{b}}{\Vert b \Vert ^2} \Leftrightarrow  b^2-2k\overrightarrow{a}.\overrightarrow{b} \geq 0 \Leftrightarrow 
b^2-2k\overrightarrow{a}.\overrightarrow{b}+k^2a^2 \geq k^2a^2 \Leftrightarrow  \left( b- ka\right)^2 \geq \left( k \Vert a \Vert\right)^2
\end{equation}
The proof is complete because for a point $a$, the set of all points $b$ containing in the feasible region defined by Equations~\eqref{eq:halfspace} and~\eqref{eq:disk} is equal to $H_o(p)\setminus \{intB_o(c,r)\}$.
\end{proof}

\paragraph{Algorithm~\ref{Alg:betaskeleton-pseudocode}:} Using Lemma~\ref{lm:q-skeleton}, we present an algorithm to compute the $\beta$-skeleton depth of $q\in \mathbb{R}^2$ with respect to $S=\{x_1..x_n\} \subseteq \mathbb{R}^2$. This algorithm is summarized in two steps. Pseudocode for this algorithm can be find in the Appendix.

\begin{itemize}
\item \textbf{Translating the points:}
This step is exactly the same step as in Algorithm~\ref{Alg:sph-pseudocode}.

\item \textbf{Calculating the $\beta$-skeleton depth:} Suppose that $a=(a_x,a_y)$ is an element in $S'$ (translated $S$). We consider a disk and a line as follows:
\[
B(c,r):\left(x- ka_x \right)^2+ \left(y- ka_y\right)^2 = \left(k\Vert a \Vert\right)^2
\]
\[
\ell(p):a_xx+ a_yy= \frac{1}{2k}\Vert a \Vert,
\]
where $k$, $c$, $r$, and $p$ are defined in Lemma~\ref{lm:q-skeleton}. From Theorem 1.2 proved in~\cite{agarwal2013range}, we can compute $\vert H_o(p)\vert$ with $O(n)$ storage, $O(n\log n)$ expected preprocessing time, and $O(n^{\frac{1}{2}+\epsilon})$ query time, where $\vert H_o(p)\vert$ is the number of all elements of $S'$ that are contained in $H_o(p)$. For the elements of $S'$, $\vert intB_o(c,r) \vert$ which is defined as the number of elements containing in the interior of $B_o(c,r)$ can also be computed with the same storage, expected preprocessing time, and query time. We recall that $B_o(c,r)$ is the intersection of halfspace $H_o(p)$ and disk $B(c,r)$, where $p$, $c$, and $r$ are some functions of $a$.  Finally, $\SkD_\beta(q,S)$ which is equal to $\SkD_{\beta}(O;S')$ can be computed by Equation~\eqref{eq:skeleton-algorithm}.
 \begin{equation}
 \label{eq:skeleton-algorithm}
 \SkD_{\beta}(O;S')=\frac{1}{2}\sum_{a\in S'}(\vert H_o(p)\vert - \vert intB_o(c,r) \vert)
 \end{equation}

Note that $H_o(p)$ and $B_o(c,r)$, referring to Definitions~\ref{def:line} and~\ref{def:ball}, can be obtained from $\ell(p)$ and $B(c,r)$, respectively in constant time.   
\end{itemize}
Lemma~\ref{lm:q-skeleton} and Algorithm~\ref{Alg:betaskeleton-pseudocode} are valid for $\beta>1$. However, the case $\beta=1$ (Algorithm~\ref{Alg:sph-pseudocode} for spherical depth) can also be included if we replace $intB_o(c,r)$ with an empty set for $ \beta=1$. In this case, we need to solve the halfspace range counting problems alone and therefore, 
\[
\SphD(q;S)=\SphD(O;S')=\frac{1}{2}\sum_{a\in S'}\vert H_o(p)\vert.
\] 
\paragraph{Time complexity of Algorithm~\ref{Alg:betaskeleton-pseudocode}:} The Translating procedure as is discussed in Algorithm~\ref{Alg:sph-pseudocode}, takes $O(n)$ time. With the $O(n \log n)$ expected preprocessing time, the second procedure takes $O(n^{\frac{3}{2}+\epsilon})$ time. In this procedure, the loop iterates $n$ times, and the range counting algorithms take $O(n^{\frac{1}{2}+\epsilon})$ time. The expected preprocessing time $O(n \log n)$ is required to obtain a data structure for the aforementioned range counting algorithms. The rest of the algorithm take some constant time per loop iteration, and therefore the total expected running time of the algorithm is $O(n^{\frac{3}{2}+\epsilon})$.

\section{Lower Bounds for Computing the $\beta$-skeleton Depth of a Point in the Plane}

We reduce the problem of Element Uniqueness\footnote{ Element Uniqueness problem: Given a set $A=\{a_1..a_n\}$, is there a pair of indices $i,j$ with $i \neq j$ such that $a_i = a_j$?} to the problem of computing the $\beta$-skeleton depth. It is known that the question of Element Uniqueness has a lower bound of $\Omega (n\log n)$ in the algebraic decision tree model of computation proposed in~\cite{ben1983lower}.

\begin{theorem}
Computing the spherical depth of a query point in the plane takes $\Omega (n\log n)$ time.
\label{thrm:lowebound}
\end{theorem}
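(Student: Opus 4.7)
The plan is to reduce Element Uniqueness to computing planar spherical depth. Given an instance $a_1, \ldots, a_n \in \mathbb{R}$, I would first rescale in $O(n)$ time so that $|a_i| < 1$ for every $i$; this is distinctness-preserving. Then I form a data set $S \subseteq \mathbb{R}^2$ of $2n$ points by setting $p_i = (a_i, 1)$ and $p_i' = (-1, a_i)$ for $i = 1, \dots, n$, and I take the query point to be the origin $q = O$.

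The key observation, obtained from Lemma~\ref{lm:point-circle} specialized to $q = O$, is that $O \in \Sph(u, v)$ iff $u \cdot v \leq 0$. I then partition the $\binom{2n}{2}$ pairs of points of $S$ into three groups. For a same-type pair $\{p_i, p_j\}$ the dot product equals $a_i a_j + 1$, which is strictly positive because $|a_i a_j| < 1$; the same holds for $\{p_i', p_j'\}$, so neither type contributes. For a cross pair $\{p_i, p_j'\}$ the dot product equals $a_j - a_i$, so the pair is counted iff $a_i \geq a_j$. The $n$ diagonal cross pairs $\{p_i, p_i'\}$ have dot product $0$ and are all counted, and among the $n(n-1)$ off-diagonal cross pairs, each unordered index set $\{i,j\}$ with $a_i \neq a_j$ contributes exactly one of its two cross pairs while each set with $a_i = a_j$ contributes both.

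Letting $D$ denote the number of index pairs $\{i,j\}$ with $a_i = a_j$, the total count therefore equals $n + \binom{n}{2} + D$, so the $a_i$'s are all distinct iff $\SphD(q; S) = (n + \binom{n}{2}) / \binom{2n}{2}$. Since constructing $S$ and reading off this condition takes $O(n)$ time, any algorithm that computes the planar spherical depth in time $T(n)$ yields an $O(T(2n) + n)$ algorithm for Element Uniqueness; combining with the Ben-Or $\Omega(n \log n)$ lower bound from~\cite{ben1983lower} gives $T(n) = \Omega(n \log n)$.

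The main thing to get right is the boundary case $u \cdot v = 0$. In this construction it arises exactly at the diagonal cross pairs and at the off-diagonal cross pairs coming from duplicate $a_i$'s; the inclusive $\geq \pi/2$ in Lemma~\ref{lm:point-circle} then makes both kinds contribute, which is precisely what lets the total count detect $D$. A minor technicality is that if $a_i = a_j$ then $S$ contains coincident points and is not in general position; this is harmless since $\SphD$ is defined by Equation~\eqref{eq:spherical} as a sum of indicator functions over index pairs and so is well defined irrespective of general position, and any algorithm claimed to compute spherical depth must be able to handle such inputs.
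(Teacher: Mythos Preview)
Your reduction is correct and is a genuinely different (and arguably cleaner) construction than the one in the paper. The paper places four rotated copies of the input on four rays at angular spacing $\pi/2$, obtaining $4n$ points, and then argues via an angular case analysis that the unnormalized depth of the origin equals $4n^2+2n$ when the $a_i$ are distinct and jumps by at least $4$ whenever a duplicate is present. You instead place two copies on the lines $y=1$ and $x=-1$, use the dot-product reformulation of Lemma~\ref{lm:point-circle} (namely $O\in\Sph(u,v)\Leftrightarrow u\cdot v\le 0$), and read off an \emph{exact} formula $n+\binom{n}{2}+D$ for the count, where $D$ is the number of duplicate index pairs. This buys you a simpler construction with half as many points and no case analysis, and it makes the role of the closed boundary in Lemma~\ref{lm:point-circle} transparent: the diagonal cross pairs and the duplicate-induced cross pairs are precisely the $u\cdot v=0$ cases. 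The paper's version, on the other hand, avoids the preliminary rescaling step and keeps everything in polar coordinates, which matches the presentation of Algorithm~\ref{Alg:sph-pseudocode}. Both arguments implicitly require that the depth algorithm handle multisets (coincident data points), as you note; the paper's proof has the same feature.
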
 
\begin{proof}
We show that finding the spherical depth allows us to answer the question of Element Uniqueness. Suppose that $A=\{a_1..a_n\}$, for $n\geq 2$ is a given set of real numbers. We suppose all of the numbers to be positive (negative), otherwise we shift the points onto the positive X-axis. For every $a_i \in A$ we construct four points $x_i$, $x_{n+i}$, $x_{2n+i}$, and $x_{3n+i}$ in the polar coordinate system as follows:
\[
x_{(kn+i)}=\left(r_i,\theta_i+\frac{k\pi}{2}\right); \; 0\leq k \leq 3, 
\] 
where $r_i= \sqrt{1+{a_i^2}}$ and $\theta_i=\tan^{-1}(1/a_i)$. Thus we have a set $S$ of $4n$ points $x_{kn+i}$, for $1\leq i \leq n$. See Figure~\ref{fig:lower-bound-S}. The Cartesian coordinates of the points can be computed by:
\[
x_{(kn+i)}=
  \left[ {\begin{array}{cc}
   0 & -1\\
   1 & 0\\
  \end{array} } \right]^k \left( {\begin{array}{cc}
   a_i\\
   1 \\
  \end{array} } \right) ;\; k=0,1,2,3.
\]
We select the query point $q=(0,0)$, and present an equivalent form of Equation~\eqref{eq:O_i} for $O_j$ as follows:
\begin{equation}
\label{eq:O-for-lowebound}
O_j=\left\{x_k\in S\mid \angle x_jqx_k \geq \frac{\pi}{2} \right\}, \: 1\leq j\leq 4n,
\end{equation}
We compute $\SphD(q;S)$ in order to answer the Element Uniqueness problem. Suppose that every $x_j\in S$ is a unique element. In this case, $|O_j|=2n+1$ because, from~\eqref{eq:O-for-lowebound}, it can be figured out that the expanded $O_j$ is as follows:

\[
O_j=\begin{cases}
\{x_{n+1}..x_{n+j}, x_{2n+1}..x_{3n}, x_{3n+j}..x_{4n}\} ;&j\in \{1,...,n\}\\
\{x_{2n+1}..x_{n+j}, x_{3n+1}.. x_{4n}, x_{j-n}..x_{n}\} ;&j\in \{n+1,...,2n\}\\
\{x_{3n+1}..x_{n+j}, x_{1}..x_{n}, x_{j-n}..x_{2n}\} ;&j\in \{2n+1,...,3n\}\\
\{x_{1}..x_{j-3n}, x_{n+1}..x_{2n}, x_{j-n}..x_{3n}\} ;&j\in \{3n+1,...,4n\}.
\end{cases}
\]
\\\\Referring to Lemma~\ref{lm:point-circle} and Equation~\eqref{eq:sph},
\[
\SphD(q;S)=\frac{1}{2}\sum_{1\leq j\leq 4n}(2n+1)=4n^2+2n.\]
Now suppose that there exist some $i\neq j$ such that $x_i= x_j$ in $S$. In this case, from Equation ~\eqref{eq:O-for-lowebound}, it can be seen that: 
\[
|O_{(kn+i)\bmod 4n}|=|O_{(kn+j)\bmod 4n}|=2n+2,  
\]
where $k=0,1,2,3$ (see Figure~\ref{fig:lower-bound-S}). As an example, for $k=0$, $|O_j|=|O_i|=2n+2$ because the expanded form of these two sets is as follows: (without loss of generality, assume $i<j<n$)

\[
O_i=O_j=\{x_{n+1}..x_{n+j},x_{2n+1}..x_{3n},x_{3n+i},x_{3n+j}, x_{3n+j+1}..x_{4n}\}.
\]
Lemma~\ref{lm:point-circle} and Equation~\eqref{eq:sph} imply that:

\[
\SphD(q;S)\geq \frac{1}{2}(8+\sum_{1\leq j\leq 4n}(2n+1))= 4n^2+2n+4.
\]
Therefore the elements of $A$ are unique if and only if the spherical depth of $(0,0)$ with respect to $S$ is $4n^2+2n$. This implies that the computation of spherical depth requires $\Omega (n\log n)$ time. It is necessary to mention that the only computation in the reduction is the construction of $S$ which takes $O(n)$ time. At the end of the proof, we mention that the reduction does not depend on the sorted order of the elements.
\end{proof}

\begin{figure}[!ht]
  \centering
    \includegraphics[width=0.45\textwidth]{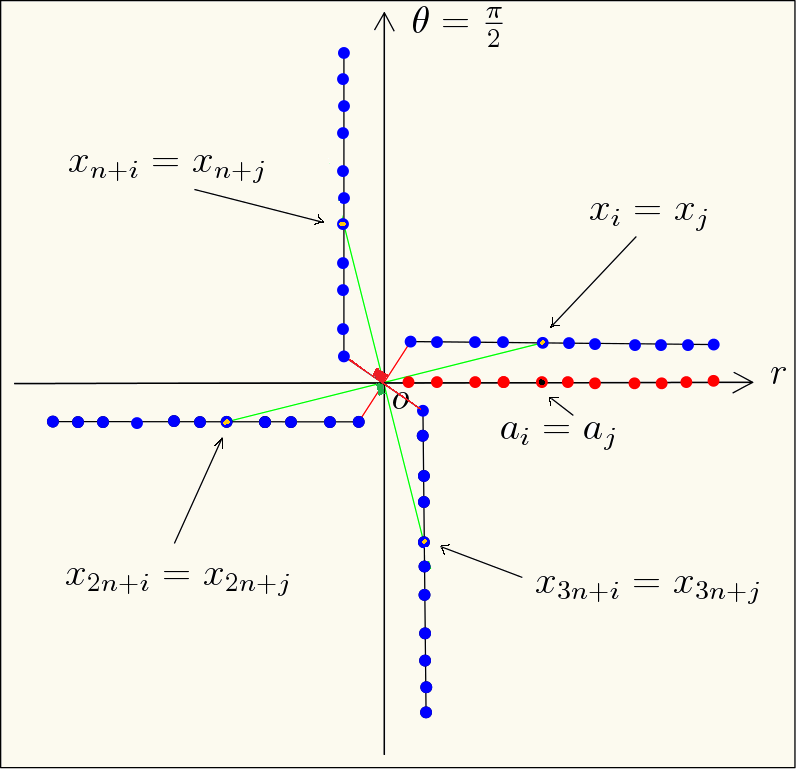}
  \caption{A representation of $A$, $S$, and duplications in these sets}
  \label{fig:lower-bound-S}
\end{figure}

\begin{note}
Instead of four copies of the elements of $A$, we could consider two copies of such elements to construct $S$. However, the depth calculation becomes more complicated in this case.
\end{note}

\begin{lemma}
Suppose that $S$ and $L_j$ are two sets defined as follows:
\[
S=\{(b_i,0),(b_i,\pi/3)\mid b_i>0, 1\leq i\leq n\}
\;\text{and}\; L_j=\{x_k\in S\mid q\in L(x_j,x_k)\}\; ,1\leq j\leq 2n.
\]
For a unique element $x_j$ in $S$, $L_j=\{x_{(n+j) \bmod 2n}\}$.
\label{lm:lens-unique}
\end{lemma}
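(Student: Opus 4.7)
The plan is to recast $q \in L(x_j,x_k)$ as a purely metric condition on the triangle $\triangle q\,x_j\,x_k$, and then exploit the fact that given the structure of $S$, the angle at the origin between any two points of $S$ takes only the two values $0$ and $\pi/3$.

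First I would unpack the definition: for $\beta = 2$, Definition~\ref{def:influence} gives $L(x_j,x_k) = B(x_j,\|x_j-x_k\|) \cap B(x_k,\|x_j-x_k\|)$. With $q$ the origin (as fixed in the proof of Theorem~\ref{thrm:lowebound}) and writing $r_j = \|x_j\|$, $r_k = \|x_k\|$, the condition $q \in L(x_j,x_k)$ reduces to $\max(r_j, r_k) \le \|x_j - x_k\|$, i.e., the edge $x_j x_k$ must be the longest side of $\triangle q\,x_j\,x_k$.

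Next I would split by whether $x_j$ and $x_k$ lie on the same ray. If both are on $\theta = 0$ (or both on $\theta = \pi/3$), then $\|x_j - x_k\| = |r_j - r_k|$, and since both radii are strictly positive this is strictly less than $\max(r_j, r_k)$, so the longest-side condition fails and $q \notin L(x_j,x_k)$. If they lie on opposite rays, the angle at the origin is exactly $\pi/3$, and the law of cosines gives
\[
\|x_j - x_k\|^2 = r_j^2 + r_k^2 - r_j r_k.
\]
The inequality $r_j^2 \le r_j^2 + r_k^2 - r_j r_k$ simplifies to $r_j \le r_k$, while the symmetric inequality forces $r_k \le r_j$; hence $q \in L(x_j,x_k)$ iff $x_j$ and $x_k$ lie on opposite rays and satisfy $r_j = r_k$.

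To finish, I would invoke the uniqueness hypothesis: if $x_j$ is unique in $S$, then the radius $r_j$ is attained by exactly one point on its own ray (namely $x_j$) and exactly one point on the opposite ray, which by the construction of $S$ is the partner $x_{(n+j)\bmod 2n}$ (using the convention that the residue $0$ is identified with $2n$). Combined with the case analysis above, this yields $L_j = \{x_{(n+j)\bmod 2n}\}$. The main substantive step is the law-of-cosines computation with angle $\pi/3$, which tightly forces the equality $r_j = r_k$; everything else is bookkeeping about which ray each point lies on and how the opposite-ray partner is indexed.
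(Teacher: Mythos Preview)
Your proposal is correct and follows essentially the same approach as the paper: both arguments translate $q\in L(x_j,x_k)$ into the metric condition $\max(r_j,r_k)\le \|x_j-x_k\|$, dispose of the collinear case immediately, and then apply the law of cosines with angle $\pi/3$ to derive the pair of inequalities $r_j\le r_k$ and $r_k\le r_j$ that force $r_j=r_k$. Your write-up is in fact slightly more complete, since you frame the $\pi/3$ case as an if-and-only-if and thereby explicitly confirm that the partner $x_{(n+j)\bmod 2n}$ does belong to $L_j$, whereas the paper tacitly assumes this and only rules out additional elements.
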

\begin{proof}
Suppose that $L_j=\{x_k, x_{(n+j)\bmod 2n}\}$ for some $x_k \in S$ ($k\neq j$). We prove that such $x_k$ does not exist. If $\angle x_jOx_k =0$, it is obvious that $O \notin L(x_j,x_k)$ and $x_k$ cannot be an element of $L_j$. For the case $\angle x_jOx_k =\pi/3$, we assume that $O \in L(x_j,x_k)$ which means that 

\begin{equation}
\label{eq:dxjxk}
d(x_j,x_k)\geq max\{d(O,x_k),d(O,x_j)\}=max\{b_k,b_j\}.
\end{equation}
\\From the cosine formula\footnote{Cosine formula: For a triangle $\bigtriangleup abc$,
\[
\vert ab\vert ^2=\vert ac\vert ^2+\vert bc\vert ^2-2\vert ac\vert\vert bc\vert cos(\angle bca) .
\]
} in triangle $\bigtriangleup x_iOx_j$, we have

\begin{equation}
\label{eq:cosin}
d^2(x_j,x_k)= b^2_j + b^2_k -2 b_k b_j cos(\pi/3).
\end{equation}
\\Equations~\eqref{eq:dxjxk} and~\eqref{eq:cosin} imply that 
\[
d(x_j,x_k)\geq b_k \Rightarrow b^2_j -b_jb_k\geq 0 \Rightarrow b_j-b_k\geq 0 \Rightarrow b_j\geq b_k
\]
 and  
\[
d(x_j,x_k)\geq b_j\Rightarrow b^2_k -b_jb_k\geq 0 \Rightarrow b_k-b_j\geq 0 \Rightarrow b_k\geq b_j.
\]
This means that $b_k=b_j$ which contradicts the assumption of $x_k \neq x_j.$
\end{proof}

\begin{theorem}
Computing the lens depth of a query point in the plane takes $\Omega (n\log n)$ time.
\label{thrm:lowebound-L}
\end{theorem}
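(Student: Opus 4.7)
The plan is to mimic the reduction from Element Uniqueness used in Theorem~\ref{thrm:lowebound}, replacing the four-ray construction with the two-ray configuration from Lemma~\ref{lm:lens-unique}. Given $A=\{a_1..a_n\}$, I would first shift the values in $O(n)$ time so that all $a_i>0$, then construct $S=\{x_1..x_{2n}\}$ with $x_i=(a_i,0)$ and $x_{n+i}=(a_i,\pi/3)$ in polar coordinates, and set the query point $q=O$. The construction is linear in $n$ and independent of the order of $A$, so an $o(n\log n)$ algorithm for planar lens depth would translate into an $o(n\log n)$ algorithm for Element Uniqueness, contradicting Ben-Or's algebraic decision tree bound~\cite{ben1983lower}.

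To compute $\LD(O;S)$, I would evaluate $\sum_{j=1}^{2n}|L_j|$, which double-counts the unordered pairs whose lens contains $O$. When the $a_i$ are pairwise distinct, every index $j$ corresponds to a unique element of $S$, so Lemma~\ref{lm:lens-unique} gives $|L_j|=1$ for every $j$, yielding exactly $n$ lenses through $O$ and hence $\LD(O;S)=n/\binom{2n}{2}$.

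The main step, and the expected obstacle, is to show that any duplicate in $A$ strictly increases this count. The inequalities in the proof of Lemma~\ref{lm:lens-unique} actually establish the stronger characterization that $O\in L(x_j,x_k)$ forces $\angle x_jOx_k=\pi/3$ together with $\Vert x_j\Vert=\Vert x_k\Vert$, while coincident pairs contribute nothing because $L(x_j,x_j)=\{x_j\}\not\ni O$. Consequently, if the distinct values in $A$ have multiplicities $m_1,\ldots,m_t$, the pairs whose lens contains $O$ are exactly those matching a ray-$0$ copy with a ray-$\pi/3$ copy of the same value, so their number is $\sum_{i=1}^{t} m_i^2$. Since $\sum_i m_i^2\geq\sum_i m_i=n$ with equality iff every $m_i=1$, the value of $\LD(O;S)$ resolves Element Uniqueness, and the claimed $\Omega(n\log n)$ lower bound follows.
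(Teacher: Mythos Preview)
Your proposal is correct and follows essentially the same reduction as the paper: the same two-ray construction $S=\{(a_i,0),(a_i,\pi/3)\}$, the same query point $q=O$, and the same appeal to Lemma~\ref{lm:lens-unique} for the distinct case. The only difference is bookkeeping in the duplicate case: the paper writes the unnormalized count as $n+2c$ where $c$ is the number of duplicate pairs, while you compute it as $\sum_i m_i^2$ via multiplicities; since $\sum_i m_i^2 = \sum_i m_i + 2\sum_i \binom{m_i}{2} = n+2c$, these are the same quantity, and your explicit handling of the degenerate coincident lenses $L(x_j,x_k)=\{x_j\}$ is a small clarification the paper omits.
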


\begin{proof}
Suppose that $B=\{b_1..b_n\}$, for $n\geq 2$ is a given set of real numbers. Without loss of generality, we let these numbers to be positive (see the proof of Theorem~\ref{thrm:lowebound}). For $1\leq i\leq n$, we construct set $S=\{x_i, x_{n+i}\}$ of $2n$ points in the polar coordinate system such that $x_i=(b_i,0)$ and $x_{n+i}=(b_i, \pi/3)$. See Figure~\ref{fig:lower-bound-L}. We select the query point $q=(0,0)$, and define $L_j$ as follows:

\begin{equation}
\label{eq:L-for-lowebound}
L_j=\left\{x_k\in S\mid q \in L(x_j,x_k) \right\}, \: 1\leq j\leq 2n.
\end{equation}
Using Equation~\eqref{eq:L-for-lowebound}, the unnormalized form of Equation~\eqref{eq:lens} can be presented by:

\begin{equation}
\label{eq:lens-lj}
\LD_S(q)= {n \choose 2}\cdot\LD(q;S)=\frac{1}{2}\sum_{1\leq j\leq 2n}\vert L_j\vert.
\end{equation}
We solve the problem of Element Uniqueness by computing $\LD_S(q)$. Suppose that every $x_j \in S$ is a unique element. In this case, it can be verified that $L_j=\{x_{(n+j) \bmod 2n}\}$ (see Lemma~\ref{lm:lens-unique}). Equation~\eqref{eq:lens-lj} implies that:

\[
\LD_S(q)=\frac{1}{2}\sum_{1\leq j\leq 2n} 1 = n.
\]
Now assume that there exists some $i\neq j$ such that $x_i = x_j$ in $S$. In this case, 
\[
L_j=L_i=\{x_{(n+i)\bmod 2n},x_{(n+j)\bmod 2n}\} \; \text{and}\; L_{n+i}=L_{n+j}=\{x_{i},x_{j}\}.
\] 
As such,
\[
\LD_S(q)=\frac{1}{2}\sum_{1\leq j\leq 2n}\vert L_j\vert =n+2.
\]
For the case of having more duplications among the elements of $S$,
\begin{equation}
\label{eq:lens-lower}
\LD_S(q)=\frac{1}{2}\sum_{1\leq j\leq 2n}\vert L_j\vert =n+2c,
\end{equation}
where $c$ is the number of duplications. Therefore the elements of $S$ are unique if and only if $c=0$ in Equation~\eqref{eq:lens-lower}. This implies that the computation of lens depth requires $\Omega (n\log n)$. Note that all of the other computations in this reduction take $O(n)$.  
\end{proof}
\begin{note}
This reduction technique can be generalized to prove Theorem ~\ref{thrm:lowebound-beta}. It is enough to choose the rotation angle $\theta=cos^{-1}(1-1/\beta)$ in the construction of $S=\{(b_i,0),(b_i,\theta)\}$. For the case of $\beta \rightarrow \infty$, where $\theta=0$, we construct $S$ as $S=\{(b_i,0),(b_i,1)\}$. Note that we use the real \textbf{\textit{RAM}} model of computation, where we can compute the square root of a real number in constant time.
\end{note}

\begin{theorem}
For $\beta > 1$, computing the $\beta$-skeleton depth of a query point in the plane requires $\Omega(n\log n)$ time.
\label{thrm:lowebound-beta}
\end{theorem}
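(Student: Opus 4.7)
The plan is to adapt the lens-depth reduction from Theorem~\ref{thrm:lowebound-L} to arbitrary $\beta>1$, following the hint in the preceding note. Given an Element Uniqueness instance $B=\{b_1..b_n\}$ of positive reals, I set $\theta=\cos^{-1}((\beta-1)/\beta)$ (well defined and lying in $(0,\pi/2)$ for every finite $\beta>1$), construct the $2n$-point set $S=\{x_i,x_{n+i}\}_{i=1}^n$ with $x_i=(b_i,0)$ and $x_{n+i}=(b_i,\theta)$ in polar coordinates, choose $q=O$, and set $L_j^\beta=\{x_k\in S\mid q\in S_{\beta}(x_j,x_k)\}$. The Cartesian conversion is constant time per point (under the real RAM model invoked in the preceding note), so the whole reduction is $O(n)$.

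The decisive step is a $\beta$-analog of Lemma~\ref{lm:lens-unique}: when the $b_i$ are pairwise distinct, $L_j^\beta=\{x_{(n+j)\bmod 2n}\}$. For this I would reuse the algebraic criterion extracted in the proof of Lemma~\ref{lm:q-skeleton}, namely that $O\in S_{\beta}(a,b)$ is equivalent to
\[
\vec{a}\cdot\vec{b}\;\le\;\tfrac{\beta-1}{\beta}\|a\|^2\quad\text{and}\quad \vec{a}\cdot\vec{b}\;\le\;\tfrac{\beta-1}{\beta}\|b\|^2.
\]
In our construction the angle $\alpha$ between any two points of $S$ viewed as vectors from $O$ is either $0$ or $\theta$. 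If $\alpha=0$ then $\cos\alpha=1$ and the two inequalities force $\|b\|\le\frac{\beta-1}{\beta}\|a\|$ and $\|a\|\le\frac{\beta-1}{\beta}\|b\|$, which is impossible since $(\beta-1)/\beta<1$ and $b_i>0$. If $\alpha=\theta$ then $\cos\alpha=(\beta-1)/\beta$, and the two inequalities reduce cleanly to $\|a\|\le\|b\|$ and $\|b\|\le\|a\|$, i.e.\ $\|a\|=\|b\|$. Distinctness of the $b_i$ then leaves only the twin point on the opposite ray.

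From the lemma the rest is counting, parallel to Theorem~\ref{thrm:lowebound-L}. In the unique case every $|L_j^\beta|=1$, giving $\binom{2n}{2}\SkD_\beta(q;S)=\tfrac12\sum_j|L_j^\beta|=n$. Any duplication $b_i=b_{i'}$ adds a member to each of $L_i^\beta$, $L_{i'}^\beta$, $L_{n+i}^\beta$, $L_{n+i'}^\beta$, contributing $4$ to the ordered sum, so the value becomes $n+2c>n$ when $c\ge 1$ pairs coincide. Thus $\SkD_\beta(q;S)=n/\binom{2n}{2}$ iff the $b_i$ are distinct, and the $\Omega(n\log n)$ algebraic-decision-tree lower bound for Element Uniqueness transfers to planar $\beta$-skeleton depth. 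I expect the main obstacle to be the case analysis of the algebraic criterion, specifically verifying that the choice $\cos\theta=(\beta-1)/\beta$ is exactly what collapses both inequalities to the radius equality $\|a\|=\|b\|$; matching this with the $1/(2k)$ threshold of Lemma~\ref{lm:q-skeleton} is what makes the whole reduction go through uniformly in $\beta$.
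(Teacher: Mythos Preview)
Your proposal is correct and follows exactly the route the paper outlines in the note preceding the theorem: the same reduction from Element Uniqueness, the same construction $S=\{(b_i,0),(b_i,\theta)\}$ with $\theta=\cos^{-1}((\beta-1)/\beta)$, and the same counting argument as in Theorem~\ref{thrm:lowebound-L}. The one technical difference is that instead of redoing a cosine-formula computation in the style of Lemma~\ref{lm:lens-unique}, you invoke the algebraic containment criterion $\vec a\cdot\vec b\le\frac{\beta-1}{\beta}\|a\|^2$ and $\vec a\cdot\vec b\le\frac{\beta-1}{\beta}\|b\|^2$ from the proof of Lemma~\ref{lm:q-skeleton}; this is a clean shortcut that makes transparent why the particular choice of $\theta$ collapses both inequalities to $\|a\|=\|b\|$, and it handles all finite $\beta>1$ uniformly.
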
 

\begin{figure}[!ht]
  \centering
    \includegraphics[width=0.45\textwidth]{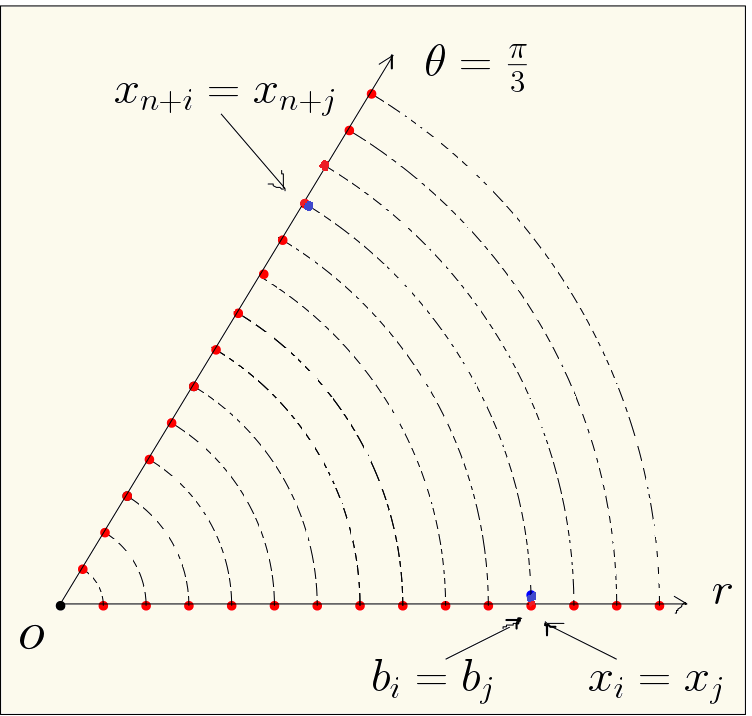}
  \caption{A representation of $B$, $S$, and duplications in these sets}
  \label{fig:lower-bound-L}
\end{figure}

\section{Relationships Among Spherical Depth, Lens Depth, and Simplicial Depth}

\begin{theorem}
For $S=\{x_1..x_n\} \subset \mathbb{R}^d$ and $q\in \mathbb{R}^d$, $\LD(q;S) \geq \SphD(q;S)$.
\label{thrm:len-sph}
\end{theorem}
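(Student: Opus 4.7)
The plan is to reduce the pointwise inequality between sums of indicators to a set inclusion between the influence regions: it suffices to show that for every pair $x_i, x_j \in S$, we have $\Sph(x_i, x_j) \subseteq L(x_i, x_j)$. Once this is established, for each pair we get $I(q \in \Sph(x_i, x_j)) \leq I(q \in L(x_i, x_j))$, and summing over all $1 \leq i < j \leq n$ and dividing by $\binom{n}{2}$ yields the claim.

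To prove the inclusion, I would just unwind the definitions with $\beta = 1$ and $\beta = 2$. For $\beta = 1$, both centers $c_i$ and $c_j$ coincide with the midpoint $m = \tfrac{1}{2}(x_i + x_j)$, and the radius is $\tfrac{1}{2}\|x_i - x_j\|$, so $\Sph(x_i, x_j)$ is the closed ball with diameter $\overline{x_i x_j}$. For $\beta = 2$, the centers become $c_i = x_i$ and $c_j = x_j$, and the radius becomes $\|x_i - x_j\|$, so $L(x_i, x_j) = B(x_i, \|x_i - x_j\|) \cap B(x_j, \|x_i - x_j\|)$.

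Now let $y \in \Sph(x_i, x_j)$, so $\|y - m\| \leq \tfrac{1}{2}\|x_i - x_j\|$. By the triangle inequality,
\begin{equation*}
\|y - x_i\| \;\leq\; \|y - m\| + \|m - x_i\| \;\leq\; \tfrac{1}{2}\|x_i - x_j\| + \tfrac{1}{2}\|x_i - x_j\| \;=\; \|x_i - x_j\|,
\end{equation*}
and symmetrically $\|y - x_j\| \leq \|x_i - x_j\|$. Hence $y \in L(x_i, x_j)$, proving $\Sph(x_i, x_j) \subseteq L(x_i, x_j)$.

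There is no real obstacle here: the proof is essentially a one-line triangle inequality argument plus a translation of the definitions of $S_\beta$ for $\beta = 1$ and $\beta = 2$. The only care needed is to correctly identify the centers and radii in Definition~\ref{def:influence} for these two special cases; everything else follows immediately. Note that this argument works verbatim in any dimension $d$, which matches the statement of the theorem.
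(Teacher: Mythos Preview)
Your proposal is correct and follows essentially the same approach as the paper: both reduce the inequality to the set inclusion $\Sph(x_i,x_j)\subseteq L(x_i,x_j)$ and then compare the indicator sums termwise. The paper simply asserts the inclusion ``can be seen'' from Definition~\ref{def:influence}, whereas you spell it out via the triangle inequality, which is a welcome bit of extra detail but not a different argument.
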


\begin{proof}
From Definition~\ref{def:influence} for the spherical and lens influence regions of any arbitrary pair of points $x_i$ and $x_j$ in $S$, it can be seen that $\Sph(x_i,x_j)$ is contained in $L(x_i,x_j)$. Hence Equation~\eqref{eq:len-sph} is sufficient to complete the proof.
\begin{equation}
\label{eq:len-sph}
\SphD(q;S)= \sum _{\{x_i,x_j\} \subset S} {I(q \in \Sph(x_i, x_j))} \leq\sum _{\{x_i,x_j\} \subset S} {I(q \in L(x_i, x_j))}= \LD(q;S)     
\end{equation}
\end{proof}

\begin{defn}
\label{def:sim}
The simplicial depth of $q\in \mathbb{R}^d$ with respect to a data set $S=\{x_1..x_n\} \subset \mathbb{R}^d$ is defined by:
\begin{equation}
\label{eq:simplicial}
\SD(q;S)=\frac{1}{{n \choose d+1}}\sum _{\{x_1..x_{d+1}\}\subset S} {I(q \in \Conv\{x_1..x_{d+1}\})},
\end{equation}
where $\Conv\{x_1..x_{d+1}\}$ is a closed simplex formed by $d+1$ points of $S$.
\end{defn}

\begin{defn}
\label{def:bin-sin}
For a point $q \in \mathbb{R}^2$ and a data set $S=\{x_1..x_n\}\subset \mathbb{R}^2$, we define $B_{in}(q;S)$ to be the set of all closed spherical influence regions, out of $n \choose 2$ possible of them, that contain $q$. We also define $S_{in}(q;S)$ to be the set of all closed simplices, out of $n \choose 3$ possible defined by $S$, that contain $q$.
\end{defn}

\begin{lemma}
 Suppose that $q$ is a point in a given convex polygon $H$ obtained from a data set $S$ in $\mathbb{R}^2$. $q$ is covered by the union of spherical influence regions defined by $S$.
\label{lm:CH-q-GC}
\end{lemma}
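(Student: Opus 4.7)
The plan is to reduce the question to a local statement about a single triangle, and then use Lemma~\ref{lm:point-circle} to finish.

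First I would invoke Carath\'eodory's theorem in $\mathbb{R}^2$: since $q$ lies in the convex polygon $H$ obtained from $S$, it is a convex combination of at most three points of $S$. So either $q=x_i$ for some $x_i\in S$, or $q$ lies on a segment $\overline{x_ix_j}$ with $x_i,x_j\in S$, or $q$ lies in the (closed) interior of a triangle $\triangle x_ix_jx_k$ with $x_i,x_j,x_k\in S$. The first two cases are immediate: if $q=x_i$, then $q$ lies on the boundary of $\Sph(x_i,x_\ell)$ for any $x_\ell\ne x_i$, and if $q\in\overline{x_ix_j}$, then $\angle x_iqx_j=\pi\geq\pi/2$, so by Lemma~\ref{lm:point-circle} we have $q\in\Sph(x_i,x_j)$.

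The main case is $q$ interior to $\triangle x_ix_jx_k$. Here I would consider the three angles subtended at $q$ by the sides of the triangle, namely $\alpha=\angle x_iqx_j$, $\beta=\angle x_jqx_k$, and $\gamma=\angle x_kqx_i$. Going once around $q$ traverses each of these wedges exactly once, so
\begin{equation*}
\alpha+\beta+\gamma=2\pi.
\end{equation*}
By the pigeonhole principle at least one of $\alpha,\beta,\gamma$ is at least $2\pi/3>\pi/2$. Say $\alpha\geq\pi/2$; then Lemma~\ref{lm:point-circle} gives $q\in\Sph(x_i,x_j)$, so $q$ is covered by the union of spherical influence regions defined by $S$. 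The case when $q$ lies on an edge of the triangle (but is not a vertex) is absorbed into the segment case above, since that edge is itself $\overline{x_ax_b}$ for two of the three chosen points.

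I do not expect a serious obstacle here: Carath\'eodory reduces the global statement (over the whole polygon) to three points, and the angle sum at $q$ together with Lemma~\ref{lm:point-circle} is the whole content of the argument. The only subtlety is making sure degenerate positions of $q$ (on the boundary of the hull, or coinciding with a data point) are handled, which is why I prefer to dispatch them before the angle-counting step.
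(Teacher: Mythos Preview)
Your proposal is correct and follows essentially the same route as the paper: reduce to a triangle containing $q$ (you invoke Carath\'eodory, the paper just asserts such a triangle exists among the vertices of $H$), then use that the three angles at $q$ sum to $2\pi$ so one is at least $2\pi/3>\pi/2$, and finish with Lemma~\ref{lm:point-circle}. The only cosmetic differences are that you argue directly by pigeonhole whereas the paper phrases it as a contradiction, and you separately dispatch the degenerate cases $q=x_i$ and $q\in\overline{x_ix_j}$, which the paper leaves implicit.
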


\begin{proof}
It can be seen that there is at least one triangle, defined by the vertices of $H$, that contains $q$. We prove that the union of the spherical influence regions defined by such triangle contains $q$. See Figure~\ref{fig:Triangleabc}. We prove this statement by contradiction. Suppose that $q$ is covered by none of $\Sph(a,b)$, $\Sph(a,c)$, and $\Sph(b,c)$. Therefore, Lemma~\ref{lm:point-circle} implies that none of the angles $\angle aqb$, $\angle aqc$, and $\angle bqc$ is greater than or equal to $\frac{\pi}{2}$ which is a contradiction because at least one of these angles should be at least $\frac{2\pi}{3}$ in order to get $2\pi$ as their sum.
\end{proof}

\begin{lemma} Suppose that $S=\{a,b,c\}$ is a set of points in $\mathbb{R}^2$. For every $q\in \mathbb{R}^2$, if $\vert S_{in}(q;S)\vert =1$, then $\vert B_{in}(q;S)\vert\geq 2$.
\\\\Another form of Lemma~\ref{lm:triangle-ball} is that if $q \in \bigtriangleup abc$, then $q$ falls inside at least two spherical influence regions out of $\Sph(a,b)$, $\Sph(c,b)$, and $\Sph(a,c)$. The equivalency between these two forms of the lemma is clear. We prove just the first one.
\label{lm:triangle-ball}
\end{lemma}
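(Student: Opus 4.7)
The plan is to invoke Lemma~\ref{lm:point-circle} to reduce the containment question to one about the three angles $\angle aqb$, $\angle bqc$, $\angle aqc$ at the vertex $q$, and then run a short pigeonhole argument on their sum. Writing the spherical version is convenient because Lemma~\ref{lm:point-circle} converts ``$q \in \Sph(x,y)$'' into the single inequality $\angle xqy \geq \pi/2$, so the goal becomes: at least two of the three angles above are $\geq \pi/2$.

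First I would assume $q$ lies in the interior of $\triangle abc$ and handle the boundary as a separate case at the end. Since the three rays $\overrightarrow{qa}$, $\overrightarrow{qb}$, $\overrightarrow{qc}$ partition the plane around $q$ into three angular sectors, we have
\[
\angle aqb + \angle bqc + \angle aqc = 2\pi,
\]
and each of these three angles, viewed as an interior angle of a (nondegenerate) triangle at vertex $q$, lies in $(0, \pi]$.

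Next I would argue by contradiction: suppose fewer than two of them are $\geq \pi/2$, so that at least two are strictly less than $\pi/2$. Without loss of generality assume $\angle aqb < \pi/2$ and $\angle bqc < \pi/2$. The equation above then forces
\[
\angle aqc > 2\pi - \tfrac{\pi}{2} - \tfrac{\pi}{2} = \pi,
\]
contradicting the bound $\angle aqc \leq \pi$. Hence at least two of the three angles are $\geq \pi/2$, and Lemma~\ref{lm:point-circle} yields $|B_{in}(q;S)| \geq 2$.

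The routine-but-necessary remainder is the degenerate boundary cases, which I expect to be the only subtle point. If $q$ lies on an edge, say $\overline{bc}$, then $\angle bqc = \pi$ gives $q \in \Sph(b,c)$ outright, while $\angle aqb + \angle aqc = \pi$ forces at least one of them to be $\geq \pi/2$, producing the second region. If $q$ coincides with a vertex, it lies on the boundary of the two spherical regions incident to that vertex directly from Definition~\ref{def:influence}. Beyond this bookkeeping of the boundary and vertex cases, I do not anticipate any real obstacle.
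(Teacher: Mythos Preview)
Your proof is correct and follows essentially the same approach as the paper: both reduce the question via Lemma~\ref{lm:point-circle} to the three angles at $q$ summing to $2\pi$ with each at most $\pi$, and conclude by a pigeonhole argument that at least two of them must be at least $\pi/2$. The only minor difference is that the paper first invokes Lemma~\ref{lm:CH-q-GC} to get $|B_{in}(q;S)|\geq 1$ and then argues for a second region, whereas you go directly; your boundary-case handling is also a bit more explicit.
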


\begin{proof}
From Lemma~\ref{lm:CH-q-GC}, $\vert B_{in}(q;S)\vert\geq 1$. Suppose that $\vert B_{in}(q;S)\vert =1$. If $q$ is one of the vertices of $\bigtriangleup abc$, it is clear that $\vert B_{in}(q;S)\vert \geq 2$. Without loss of generality, we suppose that $q$ falls in $int\Sph(a,b)$. For the rest of the proof, we focus on the relationships among the angles $\angle aqb$, $\angle cqa $, and $\angle cqb$ (see Figure~\ref{fig:Triangleabc}). Since $q$ is inside $\bigtriangleup abc$, $\angle aqb \leq \pi$. Consequently, at least one of $\angle cqa$ and $\angle cqb$ is greater than or equal to $\pi/2$. So, Lemma~\ref{lm:point-circle} implies that $q$ is in at least one of $int\Sph(a,c)$ and $int\Sph(b,c)$.  Hence, $\vert B_{in}(q;S)\vert =1$ contradicts $\vert S_{in}(q;S)\vert =1$ which means that $\vert B_{in}(q;S)\vert \geq 2$. As an illustration, in Figure~\ref{fig:Triangleabc}, for the points in the hatching area $\vert B_{in}(q;S)\vert =3$.
\end{proof}

\begin{figure}[!ht]
  \centering
    \includegraphics[width=0.45\textwidth]{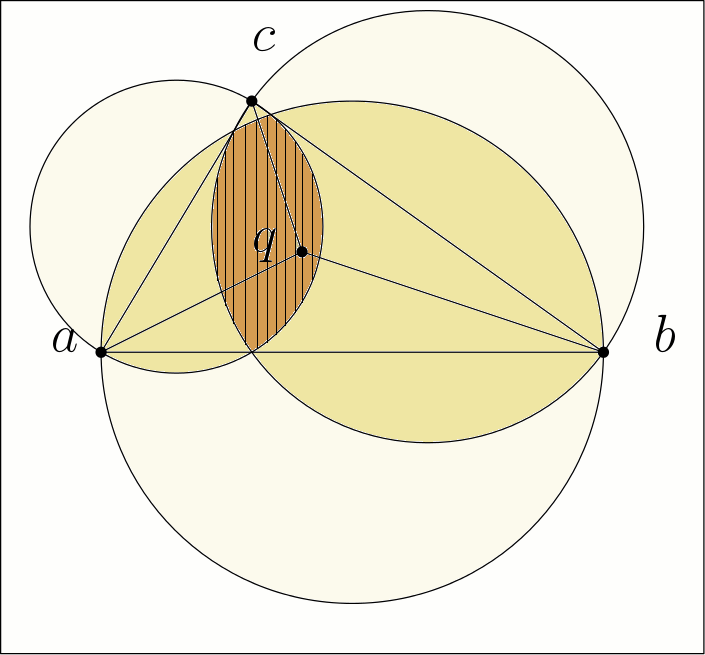}
  \caption{Triangle $abc$ contains point $q$}
  \label{fig:Triangleabc}
\end{figure}

\begin{lemma} For a data set $S=\{x_1..x_n\} \subset \mathbb{R}^2$, 
\[
\frac{\vert B_{in}(q;S)\vert }{\vert S_{in}(q;S)\vert}\geq \frac{2}{n-2}.
\]
\label{lm:bin-sin}
\end{lemma}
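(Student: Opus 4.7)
The plan is to double-count ordered pairs $(T, e)$ where $T$ is a triangle in $S_{in}(q;S)$, $e$ is one of its three edges, and the spherical influence region of that edge contains $q$ (i.e. $\Sph(e) \in B_{in}(q;S)$). Let $N$ denote the cardinality of this set of pairs.

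For the lower bound on $N$, I will apply Lemma~\ref{lm:triangle-ball} to each individual triangle $T = \bigtriangleup x_ix_jx_k \in S_{in}(q;S)$. Since $q$ lies in $T$, that lemma guarantees that at least two of the three spherical regions $\Sph(x_i,x_j)$, $\Sph(x_j,x_k)$, $\Sph(x_i,x_k)$ contain $q$. Hence each triangle contributes at least $2$ to $N$, giving
\[
N \;\geq\; 2\,|S_{in}(q;S)|.
\]

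For the upper bound on $N$, I will reorganize the sum by the edge instead of by the triangle. For each spherical region $\Sph(x_i,x_j) \in B_{in}(q;S)$, the contribution to $N$ equals the number of triangles in $S_{in}(q;S)$ that use $x_ix_j$ as one of their edges. Since the total number of triangles on the edge $x_ix_j$ (with third vertex ranging over $S\setminus\{x_i,x_j\}$) is exactly $n-2$, the number of those that also contain $q$ is at most $n-2$. Therefore
\[
N \;\leq\; (n-2)\,|B_{in}(q;S)|.
\]

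Combining the two inequalities yields $2\,|S_{in}(q;S)| \leq (n-2)\,|B_{in}(q;S)|$, which rearranges to the claim. The only nontrivial ingredient is Lemma~\ref{lm:triangle-ball} (which has already been proved), so I do not expect a real obstacle; the main thing to be careful about is stating the double count crisply, making sure we use the containment $q \in T$ to invoke the lemma and the trivial bound $n-2$ on the degree of an edge in the triangle hypergraph.
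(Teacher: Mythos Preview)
Your proposal is correct and is essentially the same double-counting argument the paper gives: the paper also uses Lemma~\ref{lm:triangle-ball} to see that each triangle in $S_{in}$ is charged to at least two spheres in $B_{in}$, and bounds each sphere's contribution by the $n-2$ triangles sharing that edge. Your explicit formulation via the incidence count $N$ is in fact a cleaner presentation of the same idea.
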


\begin{proof} Suppose that $\Sph(x_i,x_j)\in B_{in}(q;S)$. There exist at most $(n-2)$ triangles in $S_{in}(q;S)$ such that $\overline{x_ix_j}$ is an edge of them. We consider $ \bigtriangleup x_ix_jx_k$ to be one of such triangles (see Figure~\ref{fig:bin-sin} as an illustration). Referring to Lemma~\ref{lm:triangle-ball}, $q$ belongs to at least one of $\Sph(x_i,x_k)$ and $\Sph(x_j,x_k)$. Similarly, there exist at most $(n-2)$ triangles in $S_{in}(q;S)$ such that $x_ix_k$ (respectively $x_jx_k$) is an edge of them. In the process of computing the $\vert S_{in}(q;S)\vert$, triangle $\bigtriangleup x_ix_jx_k$ is counted at least two times, once for $\Sph(x_i,x_j)$ and another time for $\Sph(x_i,x_k)$ (or $\Sph(x_j,x_k)$ ). Consequently, for every sphere area in $B_{in}(q;S)$, there exist at most $\frac{(n-2)}{2}$ distinct triangles, triangles with only one common side, in $S_{in}(q;S)$. As a result, Equation~\eqref{eq:bin-sin} can be obtained. 

\begin{equation}
\frac{(n-2)}{2}\vert B_{in}(q;S)\vert \geq \vert S_{in}(q;S)\vert\Rightarrow \frac{\vert B_{in}(q;S)\vert}{\vert S_{in}(q;S)\vert}\geq \frac{2}{(n-2)}
\label{eq:bin-sin}
\end{equation}
\end{proof}
 
\begin{figure}[h!]
  \centering
  \includegraphics[width=0.45\textwidth]{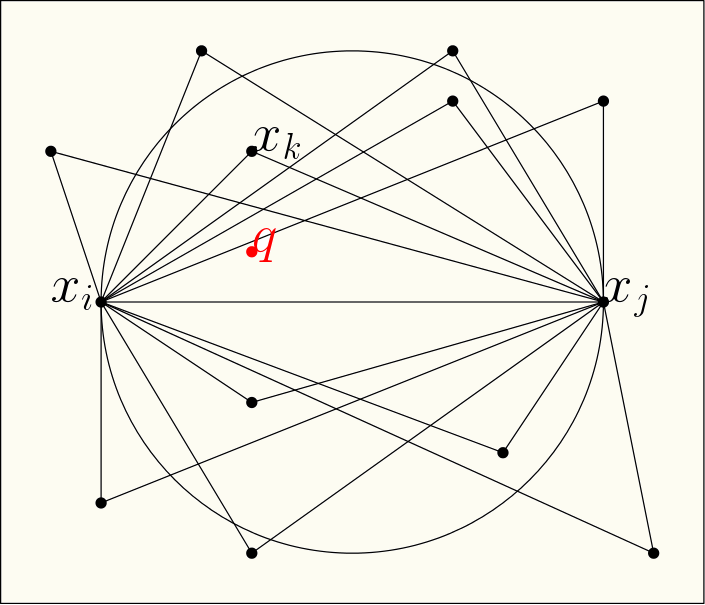}
  \caption{$q \in \Sph(x_i,x_j)$}
  \label{fig:bin-sin}
\end{figure}

\begin{theorem}
For a data set $S=\{x_1..x_n\}$ and a query point $q$ in $\mathbb{R}^2$, $\SphD(q;S)\geq \frac{2}{3} \SD(q;S)$.
\label{thrm:Sph-Simp}
\end{theorem}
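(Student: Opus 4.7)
The plan is to derive the theorem directly by substituting the definitions of $\SphD(q;S)$ and $\SD(q;S)$ and applying Lemma~\ref{lm:bin-sin}, which has already done the real geometric work. First I would use Equation~\eqref{eq:spherical} to write $\SphD(q;S) = |B_{in}(q;S)|/\binom{n}{2}$ and use Equation~\eqref{eq:simplicial} with $d=2$ to write $\SD(q;S) = |S_{in}(q;S)|/\binom{n}{3}$. The target inequality $\SphD(q;S) \geq \tfrac{2}{3}\SD(q;S)$ then cross-multiplies to a purely combinatorial inequality on the counts.

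Next I would compute the ratio $\binom{n}{3}/\binom{n}{2} = (n-2)/3$. Multiplying both sides of the target inequality by $\binom{n}{3}$ and rearranging shows that the claim $\SphD(q;S) \geq \tfrac{2}{3}\SD(q;S)$ is equivalent to
\[
\frac{|B_{in}(q;S)|}{|S_{in}(q;S)|} \geq \frac{2}{n-2},
\]
which is exactly the bound established in Lemma~\ref{lm:bin-sin}. Applying that lemma closes the argument.

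The only caveat is the degenerate case $|S_{in}(q;S)| = 0$, where the ratio is undefined; but then $\SD(q;S) = 0$ and the inequality holds trivially since $\SphD(q;S) \geq 0$. Because all of the geometry (the angle sum argument in Lemma~\ref{lm:triangle-ball} and the double-counting of shared edges in Lemma~\ref{lm:bin-sin}) was absorbed into the prior lemmas, there is no real obstacle here—the proof is essentially a one-line substitution. If anything, the only point demanding care is to make the binomial bookkeeping explicit so that the $\tfrac{2}{3}$ constant is seen to arise precisely from the identity $\tfrac{2}{n-2}\cdot\tfrac{n-2}{3} = \tfrac{2}{3}$.
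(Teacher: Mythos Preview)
Your proposal is correct and follows essentially the same route as the paper: rewrite $\SphD$ and $\SD$ in terms of $|B_{in}|$ and $|S_{in}|$, reduce the binomial factors to $(n-2)/3$, and invoke Lemma~\ref{lm:bin-sin}. Your explicit handling of the degenerate case $|S_{in}(q;S)|=0$ is a small improvement over the paper, which divides by $|S_{in}|$ without comment.
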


\begin{proof}
From Equation~\eqref{eq:spherical} and Definitions~\ref{def:sim} and~\ref{def:bin-sin}, the ratio of spherical depth and simplicial depth can be calculated as follows: 
\begin{equation}
\label{eq:first-ratio1}
\frac{\SphD(q;S)}{\SD(q;S)}= \frac{\frac{\vert B_{in}(q;S)\vert}{{n \choose 2}}}{\frac{\vert S_{in}(q;S)\vert}{{n \choose 3}}}= \frac{(n-2)\vert B_{in}(q;S)\vert}{3\vert S_{in}(q;S)\vert}.
\end{equation}
Equation~\eqref{eq:first-ratio1} and Lemma~\ref{lm:bin-sin} imply that 
\[
\frac{\SphD(q;S)}{\SD(q;S)} \geq \frac{2}{3} \Rightarrow \SphD(q;S) \geq \frac{2}{3} \SD(q;S).
\]
\end{proof}

\section{Experiments}

To support Theorems~\ref{thrm:len-sph} and~\ref{thrm:Sph-Simp}, we compute the spherical depth, lens depth, and the simplicial depth of the points in three random sets $Q_1$, $Q_2$, and $Q_3$ with respect to data sets $S_1$, $S_2$, and $S_3$, respectively. The elements of $Q_i$ and $S_i$ are some randomly generated points (double precision floating point) within the square $\{(x,y)| x,y \in [-10,10]\}$. The results of our experiments are summarized in Table~\ref{table:results-random-points}. Every cell in the table represents the corresponding depth of $q_i$ with respect to data set $S_i$, where $q_i \in Q_i$. The cardinalities of $Q_i$ and $S_i$ are as follows: $|Q_1|=100$, $|S_1|=750$, $|Q_2|=750$, $|S_2|=2500$, $|Q_3|=2500$, $|S_3|=10000$. As can be seen in Table~\ref{table:results-random-points}, there are some gaps between obtained experimental bounds for random points and the theoretical bounds. These gaps motivate us to do more research in this area.

\begin{table}[!ht]
\begin{center}
\caption{Summary of experimental results}
\label{table:results-random-points}
\begin{tabular}{|l||l|l||l|l||l|l|}
\hline
 &\multicolumn{2}{l||}{$(q_1;S_1)$}&\multicolumn{2}{l||}{$(q_2;S_2)$}&\multicolumn{2}{l|}{$(q_3;S_3)$}\\
\cline{2-7}
 &Min& Max&Min&Max&Min&Max\\
\hline\hline
$\SD$&0.00&0.25&0.00&0.25&0.00&0.24\\
\hline
$\SphD$&0.01&0.50&0.00&0.50&0.00&0.50\\
\hline
$\LD$&0.05&0.61&0.05&0.61&0.04&0.61\\
\hline
$\frac{\SphD}{\SD}$&2.00&$\infty$&2.00&$\infty$&2.03&$\infty$\\
\hline
$\frac{\LD}{\SD}$&2.43&$\infty$&2.44&$\infty$&2.44&$\infty$\\
\hline
$\frac{\LD}{\SphD}$&1.21&8.11&1.22&23.16&1.22&157.16 \\
\hline
\end{tabular}
\end{center}
\end{table}

\section{Conclusions}

In this paper, we developed an optimal $\Theta(n\log n)$ algorithm to compute the spherical depth of a bivariate query point with respect to a given data set in $\mathbb{R}^2$. We also proposed an $O(n^{\frac{3}{2}+\epsilon})$ algorithm for computing the planar $\beta$-skeleton depth where $\beta>1$. To obtain a lower bound for computing the $\beta$-skeleton depth ($\beta\geq 1$), the Element Uniqueness problem, which requires $\Omega (n\log n)$ time, is reduced to the problem of computing the planar $\beta$-skeleton depth of $O=(0,0)$. In addition to the time complexity, the main advantage of the first algorithm is its simplicity of implementation.  We also investigated some geometric properties which lead us to find some theoretical relationships (i.e.\ $\SphD\geq \frac{2}{3} SD$ and $\LD\geq \SphD$) among spherical depth, lens depth, and simplicial depth. 
Finally, some experimental results (i.e.\ $\SphD\geq 2SD$ and $\LD\geq 1.2 \SphD$) are provided. More research on this topic is needed to figure out if the real bounds are closer to the experimental bounds or to the current theoretical bounds.

\section*{Appendix}

\begin{note}
To avoid some unusual notations in the pseudocode of Algorithm~\ref{Alg:sph-pseudocode}, we use the variables $a$ and $b$ instead of $f(i)$ and $l(i)$, respectively in the text.
\end{note}

\begin{algorithm}
\renewcommand{\algorithmicrequire}{\textbf{Input:}}
\renewcommand{\algorithmicensure}{\textbf{Output:}}
\newcommand{\Break}{\State \textbf{break}}
\caption{Computing the spherical depth of points in the plane}\label{Alg:sph-pseudocode}
\begin{algorithmic}[1]
\Require Data set $S$ and Query point $q$
\Ensure $\SphD(q;S)$
\item[]
\Procedure{Translating points}{}
\\input: $S$
\\output: Translated data set $S'$
\For{each $x_i \in S$ }
\State $x_i \gets (x_i-q)$ 
\EndFor
\State \Return $S'$
\EndProcedure
\item[]
\Procedure {Sorting around $T(q)$}{}
\\input: $T(q)$ and $S'$
\\output: Sorted array $S_{T}$
\For{each $x_i\in S'$}
\State Compute $\theta(x_i)$, where $\theta(x_i)$ is the angle of $x_i$ in polar coordinate system
\EndFor
\State Using an $O(n\log n)$ sorting algorithm, sort $x_i$ based on $\theta(x_i)$ in counterclockwise order
\State \Return $S_{T}$
\EndProcedure
\item[]  
\Procedure {Depth calculation}{}
\\input: $S_{T}$
\\output: Depth value of $\SphD(q;S)$
\State Initialize $\SphD(q;S)=0$
\For{each $x_{i}\in S_{T}$}

\State Initialize $a=0$ and $b=n+1$
\State Using two \emph{binary search calls}, update the values of $a$ and $b$
\item[]
\State \textbf{if} \; ($0< \theta(x_i) \leq \frac{\pi}{2}$)
\State \hspace{0.1cm} $a=min\{j:\: \theta(x_j)-\theta(x_i)\geq \frac{\pi}{2}\}$
\State \hspace{0.1cm} $b=max\{j:\: \frac{\pi}{2} \leq \theta(x_j)-\theta(x_i)\leq \frac{3\pi}{2}\}$
\State \textbf{else-if} \; ($\frac{\pi}{2}< \theta(x_i) \leq \frac{3\pi}{2}$)
\State \hspace{0.1cm} $a=max\{j:\: \theta(x_i)-\theta(x_j)\geq \frac{\pi}{2}\}$
\State \hspace{0.1cm} $b=min\{j:\: \theta(x_j)-\theta(x_i)\geq \frac{\pi}{2}\}$ 
\State \textbf{else}
\State \hspace{0.1cm} $a=min\{j:\: \theta(x_i)-\theta(x_j)\leq \frac{3\pi}{2}\}$
\State \hspace{0.1cm} $b=max\{j:\: \frac{\pi}{2} \leq \theta(x_i)-\theta(x_j)\leq \frac{3\pi}{2}\}$
\item[]
\State \textbf{if} \; $(a=0 \;\;\text{and
}\;\; b= n+1)$ 
\State $\hspace{0.1cm}|O_i|=0$
\State \textbf{else} \; Compute $\vert O_i\vert$
\State $\vert O_i\vert= \begin{cases}
a+(n-b+1)\; ; \frac{\pi}{2}< \theta(x_i) \leq \frac{3\pi}{2}\\
    b-a+1 \hspace{0.9cm}; \text{otherwise.}
\end{cases}
$
\item[]
\State $\SphD(q;S)\gets \SphD(q;S)+ \frac{1}{2} |O_i|$ 
\EndFor

\State \Return $\SphD(q;S)$
\EndProcedure 
\State \textbf{End}; 
\end{algorithmic}
\end{algorithm}

\begin{algorithm}
\renewcommand{\algorithmicrequire}{\textbf{Input:}}
\renewcommand{\algorithmicensure}{\textbf{Output:}}
\newcommand{\Break}{\State \textbf{break}}
\caption{Computing the $\beta$-skeleton depth of points in the plane}\label{Alg:betaskeleton-pseudocode}

\begin{algorithmic}[1]
\Require Data set $S$, Query point $q$, Parameter $\beta > 1$
\Ensure $\SkD_{\beta}(q;S)$
\item[]

\Procedure{Translating points}{}
\State Call Translating procedure from Algorithm~\ref{Alg:sph-pseudocode}. 
\EndProcedure
\item[] 
 
\Procedure {Depth calculation}{}
\\input: Translated data set $S'$
\\output: Depth value of $\SkD_{\beta}(q;S)$
\State Initialize $\SkD_{\beta}(q;S)=0$
\For{each $a \in S'$}
\State Using two $O(n^{\frac{1}{2}+\epsilon})$ range counting algorithms, compute $\vert H_o(p)\vert$ and $\vert intB_o(c,r) \vert$
\item[\hspace{1.5cm}$(H_o(p)$, $intB_o(c,r)$, $p$, $c$, and $r$ all are somefunctions can be computed in constant time)] 
\State \[\SkD_{\beta}(q;S)\gets \SkD_{\beta}(q;S)+\frac{1}{2}(\vert H_o(p)\vert -\vert intB^o(c,r)\vert )\] 
\EndFor

\State \Return $\SkD_{\beta}(q;S)$
\EndProcedure 
\State \textbf{End}; 
\end{algorithmic}
\end{algorithm}

\end{document}